\newtheorem{lemma}{Lemma}
\newtheorem{proposition}{Proposition}
\newcommand\T{\rule{0pt}{2ex}}
\newcommand\B{\rule[-1ex]{5pt}{0pt}}
\begin{document}

\title{\huge{\mbox{\!A Stochastic Game Framework for Analyzing Computational} Investment Strategies in Distributed Computing}
}

\markboth{Swapnil Dhamal et al. A Stochastic Game Framework for Analyzing Computational Investment Strategies in Distributed Computing
}{}

\author{Swapnil Dhamal, Walid Ben-Ameur, Tijani Chahed, Eitan Altman, Albert Sunny, and Sudheer Poojary
\IEEEcompsocitemizethanks{
%\vspace{1mm}
\IEEEcompsocthanksitem Contact author: Swapnil Dhamal (swapnil.dhamal@gmail.com)
\vspace{1mm}
\IEEEcompsocthanksitem Swapnil  Dhamal is a postdoctoral researcher with Chalmers University of Technology, Sweden. A part of this work was done when he was a postdoctoral researcher with INRIA Sophia Antipolis-M\'editerran\'ee, France and T\'el\'ecom SudParis, France.
Walid Ben-Ameur and Tijani Chahed are professors with T\'el\'ecom SudParis, France.
Eitan Altman is a senior research scientist with INRIA Sophia Antipolis-M\'editerran\'ee, France.
Albert Sunny is an assistant professor with Indian Institute of Technology, Palakkad, India. A part of this work was done when he was a postdoctoral researcher with INRIA Sophia Antipolis-M\'editerran\'ee, France.
Sudheer Poojary is a senior lead engineer with Qualcomm India Pvt. Ltd. A part of this work was done when he was a postdoctoral researcher with Laboratoire Informatique d'Avignon, Universit\'e d'Avignon, France.
}
}

\IEEEtitleabstractindextext{
%\vspace{2mm}
\begin{abstract}
We study a stochastic game framework with dynamic set of players, for modeling and analyzing their computational investment strategies in distributed computing. Players obtain a certain reward for solving the problem or for providing their computational resources, while incur a certain cost based on the invested time and computational power. We first study a scenario where the reward is offered for solving the problem, such as in blockchain mining. We show that, in Markov perfect equilibrium, players with cost parameters exceeding a certain threshold, do not invest; while those with cost parameters less than this threshold, invest maximal power. Here, players need not know the system state. We then consider a scenario where the reward is offered for contributing to the computational power of a common central entity, such as in volunteer computing. Here, in Markov perfect equilibrium, only players with cost parameters in a relatively low range in a given state, invest. For the case where players are homogeneous, they invest proportionally to the `reward to cost' ratio. For both the scenarios, we study the effects of players' arrival and departure rates on their utilities using simulations and provide additional insights.
\end{abstract}
}

\maketitle

\IEEEdisplaynontitleabstractindextext

\IEEEpeerreviewmaketitle

\IEEEraisesectionheading{\section{Introduction}}

\IEEEPARstart{D}{istributed}
computing systems comprise computers which coordinate
to solve large problems.
In a classical sense, a distributed computing system could be viewed as several providers of computational power contributing to the power of a common central entity (e.g. in volunteer computing~\cite{sarmenta2001volunteer,anderson2006computational}). The central entity  could, in turn, use the combined power for either fulfilling its own computational needs or distribute it to the next level of requesters of power (e.g. by a computing service provider to its customers in a utility computing model). The center would decide the time for which the system is to be run, and hence the compensation or reward to be given out per unit time to the providers. This compensation or reward would be distributed among the providers based on their respective contributions.
A provider incurs a certain  cost per unit time for investing a certain amount of  power.
So, in the most natural setting where the reward per unit time is distributed to the providers in proportion to their contributed power, a higher power investment by a provider is likely to fetch it a higher reward while also increasing its incurred cost, thus resulting in a tradeoff.

Distributed computing has gained more popularity 
than ever  
owing to the advent of blockchain.
Blockchain has found application in various fields \cite{zheng2018blockchain}, such as cryptocurrencies,
smart contracts, 
security services, 
public services, 
Internet of Things, etc.
Its functioning relies   on a proof-of-work procedure, where miners (providers of computational power)
collect block  data consisting of a number of transactions, and repeatedly compute hashes on inputs
from a very large search space. 
A miner is rewarded
for mining a block, if it finds one of the rare inputs that generates a hash value satisfying certain constraints,
before the other miners.
Given the cryptographic hash function, the best {known method for finding such an input is randomized search.}
Since the proof-of-work procedure is computationally  intensive, successful mining 
requires a miner to invest significant
computational power, resulting in the miner incurring some cost. 
Once a block is mined, it is transmitted to all the miners.
A miner's objective is to maximize its utility based on the offered reward for 
mining a block before others, by strategizing on the amount of  power to invest.
There is a natural tradeoff:
a higher  investment
increases a miner's chance of solving the problem before others, while
{a lower investment 
reduces its incurred cost.}

In this paper, we study the stochastic game where players (miners or providers of computational power) can arrive and depart during the mining of a block or during a run of volunteer computing.
We consider 
two of the most common scenarios in distributed computing, namely, (1) in which the reward is offered for solving the problem (such as in blockchain mining) and (2) in which the reward is offered for contributing to the computational power of a common central entity (such as in volunteer computing).

\vspace{-2mm}
\subsection{Preliminaries
%\vspace{-.5mm}
}

\textbf{Stochastic Game.}
\cite{shapley1953stochastic}
It is a dynamic game with probabilistic transitions across different system states.
Players' payoffs and state transitions   depend on the current state and  players'  strategies. The game continues until it reaches a terminal state, if any.
Stochastic games are thus a generalization of both Markov decision processes and repeated games.

\vspace{1mm}
\noindent
\textbf{Markov Perfect Equilibrium (MPE).}
MPE \cite{maskin2001markov}
is an adaptation of subgame perfect Nash equilibrium to stochastic games.
An MPE strategy of a player is a policy function describing its strategy for each state, while ignoring history.
Each player computes its best response strategy in each state by foreseeing the effects of its actions on the state transitions and the resulting utilities, and the strategies of  other players.
A player's MPE policy is a best response to the other players' MPE policies.

It is worth noting that, while game theoretic solution concepts such as MPE, Nash equilibrium, etc. may seem impractical  owing to the common knowledge assumption, they provide a strategy profile  which can be suggested to  players (e.g. by a mediator) from which no player would  unilaterally deviate. Alternatively, if players play the game repeatedly while observing each other's actions, they would likely  settle at such a strategy profile.

\vspace{-2mm}
\subsection{Related Work
%\vspace{-1mm}
}

Stochastic games have been studied from  theoretical perspective~\cite{gillette1957stochastic,fink1964equilibrium,mertens1981stochastic,goeree1999stochastic,altman2006survey}
as well as in 
applications such as
computer networks~\cite{altman2003slotted},
cognitive radio networks~\cite{wang2011anti},
wireless network virtualization~\cite{fu2013stochastic},
queuing systems~\cite{altman1996non},
multiagent reinforcement learning~\cite{bowling2000analysis},
and complex living systems~\cite{bellomo2008modeling}.

{We enlist some of the important works on stochastic games.}
Altman and Shimkin
\cite{altman1998individual}
consider a processor-sharing  system, where an arriving customer  observes the current load on the shared system and chooses whether to join it or use a constant-cost alternative.
Nahir et al.
%, Orda, and Raz
\cite{nahir2012workload}  study a similar setup, with the difference that 
customers consider using the
system over a long time scale and for multiple jobs.
Hassin and Haviv~\cite{hassin2002nash}
propose a version of subgame perfect Nash equilibrium for games where  players are identical; each player selects strategy based on its private information regarding the system  state.
Wang and Zhang~\cite{wang2013strategic}
investigate  Nash equilibrium 
in a queuing system, where reentering  the system is a strategic decision.
Hu and Wellman~\cite{hu2003nash} use the framework of general-sum stochastic games to extend Q-learning to a noncooperative multiagent context.
There   exist works which develop algorithms for computing  good, not necessarily optimal, strategies in a state-learning setting~\cite{jiang2014dynamic,wang2018game}.

Distributed systems have been studied from game theoretic perspective
in the literature~\cite{abraham2006distributed,kwok2005selfish}.
Wei et al.
\cite{wei2010game}
study a resource allocation game in a cloud-based network, with constraints on  quality of service.
Chun et al.~\cite{chun2004selfish}
analyze the selfish caching game, where  selfish server nodes incur either cost for replicating resources or cost for access to a remote replica.
Grosu and Chronopoulos
\cite{grosu2005noncooperative}
propose a game theoretic framework for obtaining a user-optimal load balancing scheme in heterogeneous distributed systems.

Zheng and Xie
\cite{zheng2018blockchain}
present a  survey on blockchain.
Sapirshtein et al.
\cite{sapirshtein2016optimal}
study selfish mining attacks, where a miner postpones  transmission of its mined blocks 
so as to prevent other miners from  starting the mining of the next block immediately.
Lewenberg et al.
\cite{lewenberg2015bitcoin}
study pooled mining, where miners form coalitions and share the obtained rewards,
so as to reduce the variance of the reward received by each player. 
Xiong et al.~\cite{xiong2018optimal}
consider that miners can offload
the mining process 
to an edge computing service provider. They study 
a Stackelberg game where
the provider sets  price for its services, and
the miners determine the
amount of services to  request.
Altman et al. 
\cite{altman2019mining} 
model the competition over several blockchains as a non-cooperative
game, and hence show the existence of pure Nash equilibria using a congestion game approach. 
Kiayias et al.~\cite{kiayias2016blockchain}
consider a stochastic game, where each state  corresponds to the mined blocks and the players who mined them;
players strategize on 
which blocks to mine
and
when to transmit them.

In general, there exist game theoretic studies for distributed systems, as well as stochastic games for  applications including blockchain mining (where a state, however, signifies the state of the chain 
of blocks).
To the best of our knowledge, 
this work is the first 
to study a stochastic game framework for distributed computing considering the set of players to be dynamic.
We consider the most general case of heterogeneous players; the cases of homogeneous  as well as multi-type players (which also have not been studied in the literature) are special cases of this study.

\vspace{-2mm}
\section{Our Model
%\vspace{-1.5mm}
}

Consider a distributed computing system wherein agents provide their computational power to the system, and receive a certain reward for successfully solving a problem or for providing their computational resources.
We first model the scenario where the reward is offered for solving the problem, such as in blockchain mining, and explain it in detail. We then model the scenario where the reward is offered for contributing to the computational power of a common central entity, such as in volunteer computing. We hence point out the similarities and differences between the utility functions of the players in the two scenarios.

\vspace{-2mm}
\subsection{Scenario 1: Model
%\vspace{-1mm}
}

We present our model for
blockchain mining, one of the most in-demand contemporary applications of the scenario where reward is offered for solving the problem. We conclude this subsection by showing that the utility function thus obtained, generalizes to other distributed computing applications belonging to this scenario.

Let $r$ be the  reward offered to a miner for successfully solving a problem, that is, for finding a solution before all the other miners.

\vspace{1mm}
\noindent
\textbf{Players.}
We consider that there are broadly two types of players (miners) in the system, namely, (a) strategic players who can arrive and depart while a problem is being solved (e.g., during the mining of a block) and can modulate the invested power based on the system state so as to maximize their expected reward and (b) fixed players who are constantly present in the system and invest a constant amount of power for large time durations (such as typical large mining firms).
In blockchain mining, for instance,
the universal set of players during the mining of a block consists of all those who are registered as miners at the time.
In particular, we denote by $\mathcal{U}$, the set of strategic players during the mining of the block under consideration.
We denote by $\ell$, the constant amount of power invested by the fixed players throughout the mining of the block under consideration.
We consider $\ell>0$ (which is true in actual mining owing to mining firms); so the mining does not stall even if the set of strategic players is empty. %
Since the fixed players are constantly present in the system and invest a constant amount of power, we denote them as a single aggregate player $k$, who invests a constant power of $\ell$ irrespective of the system state.

Since it may not be feasible for a player to manually modulate its invested power as and when the system changes its state, we consider that the power to be invested is  modulated by a pre-configured automated software running on the player's machine.
The player can strategically determine the policy, that is, how much to invest if the system is in a given state.

We denote by cost parameter $c_i$, the cost incurred by player $i$ for investing unit amount of  power for unit time.
We consider that  players are not constrained by the cost they could incur.
Instead, they aim to maximize their expected  utilities (the expected reward they would obtain minus the expected cost they would incur henceforth), while forgetting the cost they have incurred thus far. That is,  players are Markovian.
In our work, we assume that the cost parameters of all the players are common knowledge. This could be integrated in a blockchain mining or volunteer computing interface where players can declare their cost parameters. This information is then made available to the interfaces of all other players (that is, to the automated software running on the players' machines).
In real world, it may not be practical to make the players' cost parameters a common knowledge, and furthermore, players may not reveal them truthfully.
To account for such limitations, a mean field approach could be used by assuming homogeneous or multi-type players (which are special cases of our analysis). Furthermore, it is an interesting future direction to design incentives for the players to reveal their true costs.

\vspace{1mm}
\noindent
\textbf{Arrival and Departure of Players.}
For modeling the arrivals and departures of players,
we consider a standard queueing setting.
A player $j$, who is not in the system, arrives after time which is exponentially distributed with mean $1/\lambda_j$ (that is, the rate parameter is $\lambda_j$); this is in line with the Poission arrival process where the time for the first arrival is exponentially distributed with the rate parameter corresponding to the Poisson arrival.
Further, the departure time of a player $j$, who is in the system, is exponentially distributed with rate parameter $\mu_j$.
The stochastic arrival of players is natural, like in most applications. 
Further, players would usually shut down their computers on a regular basis, or terminate the computationally demanding mining task (by closing the automated software) so as to run other critical tasks. 
Note that since players are Markovian, they do not account for how much computation they have invested thus far for mining the current block. Also, as we shall later see, the computation itself is memoryless, that is, the time required to find the solution does not depend on the time invested thus far. Owing to these two reasons, the players do not monitor block mining progress, and hence depart stochastically.

\vspace{1mm}
\noindent
\textbf{State Space.}
Due to the arrivals and departures of strategic players,
{we could  view this  
as a continuous time multi-state process,} where a state corresponds to the  set of strategic players present in the system.
So, if the set of strategic players in the system is $S$ (which excludes the fixed players), we say that the system is in state $S$.
So, we have $S \subseteq \mathcal{U}$ or equivalently, $S \in 2^{\mathcal{U}}$.
In addition, we have $|\mathcal{U}|+1$ absorbing states corresponding to the problem being solved by the respective player (one of the strategic players in $\mathcal{U}$ or a fixed player). 
The players involved at any given time would influence each others' utilities, thus resulting in a game.
The stochastic arrival and departure of players makes it a stochastic game. As we will see, there are also other stochastic events in addition to the arrivals and departures, and which depend on the players' strategies.

\vspace{1mm}
\noindent
\textbf{Players' Strategies.}
Let $\tau=0$ denote the time when the mining of the current block begins.
Let $x_i^{(S,\tau)}$ denote the strategy of player $i$ (amount of  power  it decides to invest) at time $\tau$ when the system is in state $S$.
Since players use a randomized search approach over a search space which is exponentially large as compared to the solution space, the time required to find the solution 
is independent of the  search space explored thus far. That is, the search follows  memoryless property. 
Also, note that a player has no incentive to change its strategy amidst a state owing to this memoryless property and if no other player changes its strategy.
Hence in our analysis, we consider that no player changes its strategy within a state.
So we have $x_i^{(S,\tau)} = x_i^{(S,\tau')}$  for any $\tau,\tau'$; hence player $i$'s strategy could be written as a function of the state, that is, $x_i^{(S)}$.
For a state $S$ where $j \notin S$, we have $x_j^{(S)}=0$ by convention.
Let $\mathbf{x}^{(S)}$ denote the strategy profile of the players in state $S$.
Let $\mathbf{x} = (\mathbf{x}^{(S)})_{S \subseteq \mathcal{U}}$ denote the policy profile.

\vspace{1mm}
\noindent
\textbf{Rate of  Problem Getting Solved.}
As explained earlier, the time required to find a solution in a  large search space is independent of the  search space explored thus far. We consider  this time to be exponentially distributed to model its memoryless property
($\because$ if a continuous random variable has the memoryless property over the set of reals, it is necessarily exponentially distributed).
Let $\Gamma^{(S,\mathbf{x}^{(S)})}$ be the corresponding rate of problem getting solved in state $S$, when   players' strategy profile is $\mathbf{x}^{(S)}$.
Since the time required   is independent of the  search space explored thus far, the probability that a player  finds a solution before  others at time $\tau$ is proportional to its invested  power at time $\tau$.

Note that the time required for the problem to get solved is the minimum of the times required by the players to solve the problem.
Now, the minimum of exponentially distributed random variables, is another exponentially distributed random variable with rate which is the sum of the rates corresponding to the original random variables.
Furthermore, the probability of an original random variable being the minimum, is proportional to its rate.
Let $\text{P}_j^{(S,\mathbf{x}^{(S)})}$ be the rate (corresponding to an exponentially distributed random variable) of player $j$ solving the problem in state $S$, when the strategy profile is $\mathbf{x}^{(S)}$.
So, we have 
$
\Gamma^{(S,\mathbf{x}^{(S)})} = \sum_{j \in S \cup \{k\}} \text{P}_j^{(S,\mathbf{x}^{(S)})} 
$.
Since the probability that player $i$ solves the problem before the other players is proportional to its invested computational power at that time, we have that the rate of player $i$ solving the problem is
$\text{P}_i^{(S,\mathbf{x}^{(S)})} \!=\! \frac{x_i^{(S)}}{\sum_{j \in S} x_j^{(S)} + \ell} \Gamma^{(S,\mathbf{x}^{(S)})}$,
and the  rate of other players solving the problem is 
$\text{Q}_i^{(S,\mathbf{x}^{(S)})} \!=\! \sum_{j \in (S \setminus \{i\})\cup\{k\}} \text{P}_j^{(S,\mathbf{x}^{(S)})} \!=\! \frac{\sum_{j \in S\setminus\{i\}} x_j^{(S)}+\ell}{\sum_{j \in S} x_j^{(S)}+\ell} \Gamma^{(S,\mathbf{x}^{(S)})}$.

\setlength\tabcolsep{1mm}
\begin{table}
\caption{Notation}
\label{tab:notation}
\vspace{-1mm}
\begin{small}
\begin{tabular}{|c|p{8.4cm}|}
\hline \T \B
$r$ & reward parameter
\\ \hline \T \B
$c_i$ & cost incurred by player $i$ when it invests unit power for unit time
\\ \hline \T \B
$\lambda_i$ &  arrival rate corresponding to player $i$
\\ \hline \T \B
$\mu_i$ &  departure rate corresponding to player $i$
\\ \hline \T \B
$\mathcal{U}$ & universal set of strategic players  
\\ \hline \T \B
$\ell$ &  constant amount of power invested by the fixed players 
\\ \hline \T \B
$k$ &  aggregate player accounting for all the fixed players
\\ \hline \T \B
$S$ & set of strategic players  currently present in the system
\\ \hline \T \B
$x_i^{(S)}$ & strategy of player $i$ in state $S$
\\ \hline \T \B
$\mathbf{x}^{(S)}$ & strategy profile of players in state $S$
\\ \hline \T \B
$\mathbf{x}$ & policy profile 
\\ \hline \T \B
$\!\!\!\Gamma^{(S,\mathbf{x}^{(S)})}\!\!$ & rate of problem getting solved in state $S$ under strategy profile $\mathbf{x}^{(S)}\!\!$
\\ \hline \T \B
$R_i^{(S,\mathbf{x})}$ & expected utility of  $i$ computed in state $S$ under policy profile $\mathbf{x}$
\\ \hline
\end{tabular}
\end{small}
\vspace{-1mm}
\end{table}

\vspace{1mm}
\noindent
\textbf{The Continuous Time Markov Chain.}
Owing to the players being Markovian, 
when the system transits from state $S$ to state $S'$, each player $j \in S \cap S'$ could be  viewed as effectively reentering the system.
So, the 
expected utility could be written in a recursive form, which we now derive.
\Cref{tab:notation} presents the notation.
The possible events that can occur in a state $S \in 2^\mathcal{U}$ are:

%\vspace{-1mm}
\begin{enumerate}[leftmargin=*]
\setlength\itemsep{0em}
\item
the problem gets solved by player $i$ with rate $\text{P}_i^{(S,\mathbf{x}^{(S)})}$,
thus terminating the  game in the absorbing state where  $i$ gets a reward of $r$;

\item
the problem gets solved by one of the other players in $(S \setminus \{i\}) \cup \{k\}$ with rate $\text{Q}_i^{(S,\mathbf{x}^{(S)})}$,
thus terminating the game in an absorbing state  where player $i$ gets no reward;

\item
{a new player $j \in \mathcal{U} \setminus S$ arrives and the system transits to state $S \cup \{j\}$ with rate $\lambda_j$;}
\item
one of the players $j \in S$ departs and the system transits to state $S \setminus \{j\}$ with rate $\mu_j$.
%\vspace{-1mm}
\end{enumerate}

\noindent 
In what follows, we unambiguously write $j \in \mathcal{U} \setminus S$ as  \mbox{$j \notin S$}, for brevity.
{Since $\text{P}_i^{(S,\mathbf{x}^{(S)})} + \text{Q}_i^{(S,\mathbf{x}^{(S)})} = \Gamma^{(S,\mathbf{x}^{(S)})} $, the sojourn time in state $S$ is $( \Gamma^{(S,\mathbf{x}^{(S)})} + \sum_{j \notin S} \lambda_j + \sum_{j \in S} \mu_j )^{-\!1}\!$}.
Let $D^{(S,\mathbf{x})} = \Gamma^{(S,\mathbf{x}^{(S)})} + \sum_{j \notin S} \lambda_j + \sum_{j \in S} \mu_j $.
So, the 
expected cost incurred by player  $i$ while the system is in state $S$ is
$\frac{c_i x_i^{(S)}}{D^{(S,\mathbf{x})} }$.

\vspace{1mm}
\noindent
\textbf{Utility Function.}
The probability of an event occurring before any other event is equivalent to the corresponding exponentially distributed random variable being the minimum, which in turn, is proportional to its rate.
So, player $i$'s  expected utility  as computed in state $S$ is

%\vspace{-5.5mm}
\begin{small}
\begin{align}
\hspace{-2mm} 
R_i^{(S,\mathbf{x})}  = &
\frac{\Gamma^{(S,\mathbf{x}^{(S)})} \frac{x_i^{(S)}}{\sum_{j\in S} x_j^{(S)} +\ell}}{D^{(S,\mathbf{x})}} \!\cdot\! r
+
\frac{\Gamma^{(S,\mathbf{x}^{(S)})} \frac{\sum_{j\in S \setminus \{i\}} x_j^{(S)}+\ell}{\sum_{j\in S} x_j^{(S)}+\ell}}{D^{(S,\mathbf{x})}} \!\cdot\! 0
\nonumber
\\
& \hspace{-3mm}
\!+\! \sum_{j \notin S} \frac{\lambda_j}{D^{(S,\mathbf{x})}} \!\cdot\!  R_i^{(S \cup \{j\},\mathbf{x})} 
\!+\! \sum_{j \in S } \frac{\mu_j}{D^{(S,\mathbf{x})}} \!\cdot\!  R_i^{(S \setminus \{j\},\mathbf{x})} 
\!-\! \frac{c_i x_i^{(S)}}{D^{(S,\mathbf{x})}} 
\label{eqn:recursive} 
\end{align}
\end{small}
%\vspace{-2mm}

Note that 
we do not incorporate an explicit discounting factor with time. However, the utility of player $i$ can be viewed as discounting the future owing to the possibility that the problem can get solved in a state $S$ where $i \notin S$.
Moreover, our analyses are easily generalizable if an explicit discounting factor is incorporated.

For distributed computing applications with a fixed objective such as finding a solution to a given  problem, it is reasonable to assume that the rate of the problem getting solved is proportional to the total power invested by the providers of computation.
We, hence, consider that $\Gamma^{(S,\mathbf{x}^{(S)})} \!=\!  \gamma \left( \sum_{j \in S} x_j^{(S)} + \ell \right)$, where $\gamma$ is the rate constant 
of proportionality
determined by the problem being solved.
Hence, player $i$'s  expected utility  as computed in state $S$ is

%\vspace{-4mm}
\begin{small}
\begin{align}
\hspace{-2mm} 
\nonumber
R_i^{(S,\mathbf{x})}  = 
(\gamma r - c_i) \frac{ x_i^{(S)}}{D^{(S,\mathbf{x})}} 
& \!+\! \sum_{j \notin S} \frac{\lambda_j}{D^{(S,\mathbf{x})}} \!\cdot\!  R_i^{(S \cup \{j\},\mathbf{x})} 
\\ & \!+\! \sum_{j \in S } \frac{\mu_j}{D^{(S,\mathbf{x})}} \!\cdot\!  R_i^{(S \setminus \{j\},\mathbf{x})} 
\label{eqn:recursive_Sc1} 
\end{align}
\end{small}
%\vspace{-3mm}

\noindent
where $D^{(S,\mathbf{x})} = \gamma \left( \sum_{j \in S} x_j^{(S)} + \ell \right) + \sum_{j \notin S} \lambda_j + \sum_{j \in S} \mu_j$.

\vspace{1mm}
\noindent
\textbf{Other Applications of Scenario 1.}
We derived Expression~(\ref{eqn:recursive}) for the expected utility by considering that the probability of player $i$ being the first to solve the problem is proportional to its invested  power at the time, and hence obtains the reward $r$ with this probability.
Now, consider another type of system which aims to solve an NP-hard problem where players search for a solution, and the system rewards the players in proportion to their invested power when the problem gets solved. In this case, the first two terms of Expression~(\ref{eqn:recursive}) are replaced with the term
$\frac{\Gamma^{(S,\mathbf{x}^{(S)})} \left(\frac{x_i^{(S)}}{\sum_{j\in S} x_j^{(S)} +\ell}  r \right)}{D^{(S,\mathbf{x})}} $.
So,  the  mathematical form stays the same, and so when $\Gamma^{(S,\mathbf{x}^{(S)})} \!=\!  \gamma \left( \sum_{j \in S} x_j^{(S)} + \ell \right)$, our analysis presented in \Cref{sec:scenario1} holds for this case too.

\vspace{-2mm}
\subsection{Scenario 2: Model
%\vspace{-1mm}
}

We now consider the scenario 
where the reward is offered for contributing to the computational power of a common central entity, such as in volunteer computing.
Here, the reward offered per unit time is inversely proportional to the expected time for which the center decides to run the system.
Considering that the time for which the center plans to run the system is exponentially distributed with rate parameter $\beta$, the reward offered per unit time is inversely proportional to  $\frac{1}{\beta}$, and hence directly proportional to $\beta$.
Hence, let the offered reward per unit time be $r \beta$,  where $r$ is the reward constant of proportionality.
Furthermore, the reward given to a player is proportional to its computational investment.
So, the revenue received by player $i$ per unit time is $\frac{x_i^{(S)}}{\sum_{j\in S} x_j^{(S)} +\ell}  r \beta$, and hence its net profit per unit time is $\frac{x_i^{(S)}}{\sum_{j\in S} x_j^{(S)} +\ell}  r \beta - c_i x_i^{(S)}$.
The sojourn time in state $S$, similar to the previous scenario, is $\frac{1}{D^{(S,\mathbf{x})}}$, where $D^{(S,\mathbf{x})} = \beta + \sum_{j \notin S} \lambda_j + \sum_{j \in S} \mu_j$ (here, we have $\beta$ instead of $\Gamma^{(S,\mathbf{x}^{(S)})}$). So, the net expected  profit made by player $i$ in state $S$ before the system transits to another state, is $\frac{\frac{x_i^{(S)}}{\sum_{j\in S} x_j^{(S)} +\ell} r \beta - c_i x_i^{(S)}}{D^{(S,\mathbf{x})}} $.

Hence, player $i$'s expected utility as computed in state $S$ is

%\vspace{-4mm}
\begin{small}
	\begin{align}
	\nonumber
	\hspace{-2mm} 
	R_i^{(S,\mathbf{x})}  = 
	\frac{\frac{x_i^{(S)}}{\sum_{j\in S} x_j^{(S)} +\ell} r \beta - c_i x_i^{(S)}}{D^{(S,\mathbf{x})}} 
	& \!+\! \sum_{j \notin S} \frac{\lambda_j}{D^{(S,\mathbf{x})}} \!\cdot\!  R_i^{(S \cup \{j\},\mathbf{x})} 
	\\ & \!+\! \sum_{j \in S } \frac{\mu_j}{D^{(S,\mathbf{x})}} \!\cdot\!  R_i^{(S \setminus \{j\},\mathbf{x})} 
\label{eqn:recursive_Sc2} 
	\end{align}
\end{small}
%\vspace{-4mm}

Note that since $D^{(S,\mathbf{x})} = \beta + \sum_{j \notin S} \lambda_j + \sum_{j \in S} \mu_j$ here, Expression~(\ref{eqn:recursive_Sc2}) is obtainable from Expression~(\ref{eqn:recursive}), when 
$\Gamma^{(S,\mathbf{x}^{(S)})} = \beta$.

\vspace{1mm}
\noindent
\textbf{Other Variants of Scenario 2.}
We considered that the time for which the center decides to run the system is exponentially distributed with rate parameter $\beta$, where $\beta$ is a constant.
For theoretical interest, one could consider a generalization where
the system may dynamically determine this parameter based on the set of players $S \cup \{k\}$ present in the system.
Let such a rate parameter be given by $f(S)$. Since the fixed players and their invested power do not change, these could be encoded in $f(\cdot)$, thus making it a function of only the set of strategic players.
The center could determine $f(S)$
based on the cost parameters of the players in set $S$, the past records of the investments of players in set $S$, etc.
If the time for which the system is to run is independent of the set of players currently present in the system, we have the special case: $f(S) = \beta, \forall S$.
It can be easily seen that the analysis presented in this paper (\Cref{sec:scenario2}) goes through directly by replacing $\beta$ with $f(S)$, since $\Gamma^{(S,\mathbf{x}^{(S)})} = f(S)$ is also independent of the players' investment strategies.

Further, note that if the rate parameter is not just dependent on the set of players present in the system but also proportional to their invested power, it could be written as $\Gamma^{(S,\mathbf{x}^{(S)})} \!=\!  \gamma \left( \sum_{j \in S} x_j^{(S)} + \ell \right)$. This leads to the utility function being given by Equation~(\ref{eqn:recursive_Sc1}) and hence its analysis is same as that of Scenario 1 (\Cref{sec:scenario1}).

%\vspace{-2mm}
\subsection*{Convergence of Expected Utility
%\vspace{-1mm}
}

Note that Equation~(\ref{eqn:recursive}) encompasses both scenarios, where $\Gamma^{(S,\mathbf{x}^{(S)})} = \gamma \left( \sum_{j \in S} x_j^{(S)} + \ell \right)$ leads to Scenario 1, while $\Gamma^{(S,\mathbf{x}^{(S)})} = \beta$ leads to Scenario 2.
We now show the convergence of this recursive equation, and hence derive a closed-form expression for  utility function.

Let us define an ordering $\mathcal{O}$ on sets  which presents a one-to-one mapping from a set $S \subseteq \mathcal{U}$ to an integer between $1$ and $2^{|\mathcal{U}|}$, both inclusive.
Let $\mathbf{R}_i^{(\mathbf{x})}$ be the vector whose  component $\mathcal{O}(S)$ is $R_i^{(S,\mathbf{x})}$. We now show that $\mathbf{R}_i^{(\mathbf{x})}$ computed  using   the recursive Equation~(\ref{eqn:recursive}), converges for any policy profile $\mathbf{x}$.
Let $\mathbf{W}^{(\mathbf{x})}$ be the state transition matrix, among the states corresponding to the  set of strategic players present in the system.
In what follows, instead of writing  $W^{(\mathbf{x})}(\mathcal{O}(S),\mathcal{O}(S'))$, we simply write $W^{(\mathbf{x})}(S,S')$
since it does not introduce any ambiguity.
So, the elements of $\mathbf{W}^{(\mathbf{x})}$ are as follows:

\vspace{-3mm}
\begin{small}
\begin{align*}
\hspace{-3mm}
\text{For $j \notin S$}:\,
& W_i^{(\mathbf{x})}(S,S \cup \{j\}) = \frac{\lambda_j}{D^{(S,\mathbf{x})}} \\ %\text{ and }
\text{For $j \in S$}:\,
& W_i^{(\mathbf{x})}(S,S \setminus \{j\}) = \frac{\mu_j}{D^{(S,\mathbf{x})}} \;,
\end{align*}
\end{small}
\vspace{-1mm}

\noindent
All other elements of $\mathbf{W}^{(\mathbf{x})}\!$ are $0$.

Since $\ell>0$, we have that $\Gamma^{(S,\mathbf{x}^{(S)})} > 0$.
So, $D^{(S,\mathbf{x})} > \sum_{j \notin S} \lambda_j + \sum_{j \in S} \mu_j $.
Hence, $\!\mathbf{W}_i^{(\mathbf{x})}\!$ is  strictly substochastic  (sum of the elements in each of its rows is less than 1).
\\
Let $\mathbf{Z}_i^{(\mathbf{x})}$ be the vector whose  component $\mathcal{O}(S)$ is $Z_i^{(S,\mathbf{x})}\!$, where

\vspace{-1mm}
\begin{small}
\begin{align*}
& Z_i^{(S,\mathbf{x})} \!=\! \Bigg( \frac{\Gamma^{(S,\mathbf{x}^{(S)})}}{\sum_{j\in S} x_j^{(S)}+\ell} r \!-\!c_i \Bigg)\frac{x_i^{(S)}}{D^{(S,\mathbf{x})}} \;,
\end{align*}
\end{small} 
%\vspace{-2mm}

\begin{proposition}
$\mathbf{R}_i^{(\mathbf{x})} = (\mathbf{I}-\mathbf{W}^{(\mathbf{x})})^{-1} \mathbf{Z}_i^{(\mathbf{x})}$.
\label{prop:convergence}
\end{proposition}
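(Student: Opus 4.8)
The plan is to recognize the recursion (\ref{eqn:recursive}) as a linear fixed-point equation in the vector $\mathbf{R}_i^{(\mathbf{x})}$ and then exploit the strict substochasticity of $\mathbf{W}^{(\mathbf{x})}$ to guarantee that this fixed point exists, is unique, and has the claimed closed form. First I would collect the reward term $\frac{\Gamma^{(S,\mathbf{x}^{(S)})} x_i^{(S)} / (\sum_{j \in S} x_j^{(S)}+\ell)}{D^{(S,\mathbf{x})}}\, r$ and the cost term $-\frac{c_i x_i^{(S)}}{D^{(S,\mathbf{x})}}$ of (\ref{eqn:recursive}) into the single quantity $Z_i^{(S,\mathbf{x})}$ defined just above the statement, noting that the middle term is multiplied by $0$ and vanishes. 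The remaining arrival and departure terms are, by construction, exactly the entries of $\mathbf{W}^{(\mathbf{x})}$ applied to $\mathbf{R}_i^{(\mathbf{x})}$. This rewrites the entire recursion compactly as $\mathbf{R}_i^{(\mathbf{x})} = \mathbf{Z}_i^{(\mathbf{x})} + \mathbf{W}^{(\mathbf{x})} \mathbf{R}_i^{(\mathbf{x})}$, equivalently $(\mathbf{I} - \mathbf{W}^{(\mathbf{x})}) \mathbf{R}_i^{(\mathbf{x})} = \mathbf{Z}_i^{(\mathbf{x})}$.

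The crux is then to show that $\mathbf{I} - \mathbf{W}^{(\mathbf{x})}$ is invertible. For this I would invoke the fact, already established in the excerpt, that every row sum of $\mathbf{W}^{(\mathbf{x})}$ is strictly less than one. Since $\mathbf{W}^{(\mathbf{x})}$ is nonnegative with largest row sum below one, its induced maximum-row-sum norm satisfies $\|\mathbf{W}^{(\mathbf{x})}\|_\infty < 1$, and hence its spectral radius is strictly less than one. It follows that the Neumann series $\sum_{n=0}^{\infty} (\mathbf{W}^{(\mathbf{x})})^n$ converges and equals $(\mathbf{I} - \mathbf{W}^{(\mathbf{x})})^{-1}$. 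This simultaneously establishes that $\mathbf{I} - \mathbf{W}^{(\mathbf{x})}$ is nonsingular and that the iterative computation $\mathbf{R} \mapsto \mathbf{Z}_i^{(\mathbf{x})} + \mathbf{W}^{(\mathbf{x})}\mathbf{R}$ is a contraction in the $\infty$-norm, so it converges to a unique limit from any starting vector, which is the content alluded to by the ``Convergence of Expected Utility'' heading. Solving the resulting nonsingular linear system gives $\mathbf{R}_i^{(\mathbf{x})} = (\mathbf{I} - \mathbf{W}^{(\mathbf{x})})^{-1} \mathbf{Z}_i^{(\mathbf{x})}$, as claimed.

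I do not anticipate a deep obstacle, since the argument is a clean contraction/Neumann-series routine once the matrix form is in place. The one point demanding genuine care is verifying that the strict row-sum bound holds \emph{uniformly} across all $2^{|\mathcal{U}|}$ states rather than merely generically; this is precisely where the standing assumption $\ell > 0$ is indispensable. Because $\ell > 0$ forces $\Gamma^{(S,\mathbf{x}^{(S)})} > 0$ in every state $S$, the denominator $D^{(S,\mathbf{x})}$ strictly exceeds $\sum_{j \notin S}\lambda_j + \sum_{j \in S}\mu_j$ even in the extreme states $S = \emptyset$ (no departures possible) and $S = \mathcal{U}$ (no arrivals possible). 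Without a permanently present fixed player, a state admitting no solving event could have its row sum equal to one, which would leave $\mathbf{I} - \mathbf{W}^{(\mathbf{x})}$ singular and break both the convergence and the closed-form inversion.
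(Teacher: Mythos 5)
Your proposal is correct and follows essentially the same route as the paper: both cast the recursion in matrix form, use the strict substochasticity of $\mathbf{W}^{(\mathbf{x})}$ (guaranteed by $\ell>0$) to get spectral radius below one, and conclude via the Neumann series that $(\mathbf{I}-\mathbf{W}^{(\mathbf{x})})^{-1}$ exists and gives the closed form. The paper phrases this as unrolling the value iteration $\mathbf{R}_{i\langle t\rangle}=\mathbf{W}^{(\mathbf{x})}\mathbf{R}_{i\langle t-1\rangle}+\mathbf{Z}_i^{(\mathbf{x})}$ and letting $t\to\infty$, while you solve the fixed-point equation directly, but the substance is the same.
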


\begin{proof}
	
	Let $\mathbf{R}_{i\langle t \rangle}^{(\mathbf{x})} = (R_{i \langle t \rangle}^{(1,\mathbf{x})},\ldots,R_{i \langle t \rangle}^{(2^{|\mathcal{U}|},\mathbf{x})})^T$,
	where $t$ is the iteration number and $(\cdot)^T$ stands for matrix transpose.
	%For examining the convergence of $\mathbf{R}_i^{(\mathbf{x})}$, 
	The iteration for the value of $\mathbf{R}_{i\langle t \rangle}^{(\mathbf{x})}$ starts at $t=0$; we examine if it converges when $t \rightarrow \infty$. Now, the expression for the expected utility in all states can be written in matrix form and then solving the recursion, as
	
	\vspace{-3mm}
	\begin{small}
		\begin{align*}
		\mathbf{R}_{i \langle t \rangle}^{(\mathbf{x})} &= \mathbf{W}^{(\mathbf{x})}  \mathbf{R}_{i \langle t-1 \rangle}^{(\mathbf{x})} + \mathbf{Z}_i^{(\mathbf{x})}
	\\	&= \left(\mathbf{W}^{(\mathbf{x})}\right)^t \mathbf{R}_{i \langle 0 \rangle}^{(\mathbf{x})} + \Bigg( \sum_{\eta=0}^{t-1} \left(\mathbf{W}^{(\mathbf{x})}\right)^\eta \Bigg) \mathbf{Z}_i^{(\mathbf{x})}
		\end{align*}
	\end{small}
	%\vspace{-2mm}
	
	\noindent
	Now, since $\mathbf{W}^{(\mathbf{x})}$ is  strictly substochastic,
	its spectral radius is  less than 1.
	So when $t \rightarrow \infty$, we have $\lim_{t \rightarrow \infty} (\mathbf{W}^{(\mathbf{x})})^t\! = \mathbf{0}$.
	Since $\mathbf{R}_{i \langle 0 \rangle}^{(\mathbf{x})}$ is a finite constant,
	%assuming bounded $\mathbf{v}^0$, 
	we have $\lim_{t \rightarrow \infty} (\mathbf{W}^{(\mathbf{x})})^t \mathbf{R}_{i \langle 0 \rangle}^{(\mathbf{x})} = \mathbf{0}$.
	Further,
	$\lim_{t \rightarrow \infty} \sum_{\eta=0}^{t-1} (\mathbf{W}^{(\mathbf{x})})^\eta = (\mathbf{I}-\mathbf{W}^{(\mathbf{x})})^{-1}$
	%,
	% an established matrix identity 
	\cite{hubbard2015vector}. 
	This  implicitly means that $(\mathbf{I}-\mathbf{W}^{(\mathbf{x})})$ is invertible.  
	Hence, 
	
	\vspace{-2mm}
	\begin{small}
		\begin{align*}
		\lim_{t \rightarrow \infty} \mathbf{R}_{i \langle t \rangle}^{(\mathbf{x})} &= \lim_{t \rightarrow \infty} \left(\mathbf{W}^{(\mathbf{x})}\right)^t \mathbf{R}_{i \langle 0 \rangle}^{(\mathbf{x})} + \Bigg( \sum_{\eta=0}^{\infty} \left(\mathbf{W}^{(\mathbf{x})}\right)^\eta \Bigg) \mathbf{Z}_i^{(\mathbf{x})}
		\\[0em]
		&= \mathbf{0} + (\mathbf{I}-\mathbf{W}^{(\mathbf{x})})^{-1} \mathbf{Z}_i^{(\mathbf{x})}
		\qedhere
		\end{align*}
	\end{small}
\end{proof}

\vspace{2mm}
Owing to the requirement of deriving the inverse of $\mathbf{I}-\mathbf{W}^{(\mathbf{x})}$,
it is clear that a general analysis of the concerned stochastic game when considering an arbitrary $\mathbf{W}^{(\mathbf{x})}$ is intractable.
In this work, we consider two special scenarios that we motivated earlier in the context of distributed computing systems, 
for which we show that the analysis turns out to be tractable.

\vspace{-2mm}
\section{Scenario 1: Analysis of MPE
}
\label{sec:scenario1}

Let $\hat{R}_i^{(S,\mathbf{x})}$ be the equilibrium utility of player $i$ in state $S$, that is, when  $i$ plays its best response strategy to the equilibrium strategies of the other players $j \in S \setminus \{i\}$ (while foreseeing   effects of its actions on  state transitions and resulting utilities). 
We can determine MPE similar to  optimal policy in MDP (using policy-value iterations to reach a fixed point).
Here, for maximizing $\hat{R}_i^{(S,\mathbf{x})}$,
we could assume that we have optimized for other states and use those values to find an optimizing $\mathbf{x}$ for maximizing $\hat{R}_i^{(S,\mathbf{x})}$.
In our case, we have  a closed form expression for  vector $\mathbf{R}_i^{(\mathbf{x})}$ in terms of policy $\mathbf{x}$ (\Cref{prop:convergence}); so we could effectively determine the fixed point directly.

A  policy is said to be {\em proper\/} if 
from any initial
state, the probability of reaching a terminal state 
is strictly positive.
Consider the condition that, there exists at least one proper policy, and for any non-proper policy, there exists at least
one state where the 
value function is negatively unbounded.
It is known that, under this condition, the optimal value
function is bounded, and it is the unique fixed point of the optimal Bellman
operator~\cite{bertsekas1995neuro}.
Our model satisfies this condition,
since there does not exist any non-proper policy 
as the probability of reaching a terminal state  corresponding to the problem getting solved (either by player $i$ or any other player including the fixed players) is strictly positive ($\because \Gamma^{(S,\mathbf{x}^{(S)})} > 0$).

Now, from Equation~(\ref{eqn:recursive_Sc1}), the Bellman equations over states $S \in 2^\mathcal{U}$ for player $i$ can be written as

\vspace{-2mm}
\begin{small}
\begin{align*} 
\hat{R}_i^{(S,\mathbf{x})}  = \max_{\mathbf{x}} \Bigg\{ (\gamma r - c_i)
\frac{x_i^{(S)}}{D^{(S,\mathbf{x})}} 
& \!+\! \sum_{j \notin S} \frac{\lambda_j}{D^{(S,\mathbf{x})}} \!\cdot\! \hat{R}_i^{(S\cup \{j\},\mathbf{x})} 
\\ & \!+\! \sum_{j\in S} \frac{\mu_j}{D^{(S,\mathbf{x})}} \!\cdot\! \hat{R}_i^{(S\setminus \{j\},\mathbf{x})} 
\Bigg\}
\end{align*}
\end{small}
\vspace{-.5mm}

We now derive some results, leading to the derivation of MPE.

\begin{lemma}
In Scenario 1,
for any state $S$ and  policy profile $\mathbf{x}$, we have
$R_i^{(S,\mathbf{x})} \!<\! r \!-\! \frac{c_i}{\gamma}$ if $\gamma r \!>\! c_i$, and  $R_i^{(S,\mathbf{x})} \!>\! r \!-\! \frac{c_i}{\gamma}$ if $\gamma r \!<\! c_i$.
\label{lem:bounded}
\end{lemma}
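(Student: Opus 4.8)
The plan is to study the shifted quantity $U_i^{(S,\mathbf{x})} := \bigl(r - \tfrac{c_i}{\gamma}\bigr) - R_i^{(S,\mathbf{x})}$ and to show that its sign is the same across all states and is governed by $\gamma r - c_i$. Writing $R_i^{(S,\mathbf{x})} = \bigl(r - \tfrac{c_i}{\gamma}\bigr) - U_i^{(S,\mathbf{x})}$ and substituting into the recursion Equation~(\ref{eqn:recursive_Sc1}), I expect the additive constant $r - c_i/\gamma$ to interact cleanly with the transition coefficients, since $\sum_{j\notin S}\lambda_j + \sum_{j\in S}\mu_j = D^{(S,\mathbf{x})} - \Gamma^{(S,\mathbf{x}^{(S)})}$. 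After cancelling the shared constant term on both sides, $\mathbf{U}_i$ should obey a recursion of exactly the same shape as the one for $\mathbf{R}_i$, namely $\mathbf{U}_i^{(\mathbf{x})} = \mathbf{a} + \mathbf{W}^{(\mathbf{x})}\mathbf{U}_i^{(\mathbf{x})}$, where only the source vector has changed.

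The decisive step is identifying that source vector. Using $\Gamma^{(S,\mathbf{x}^{(S)})} = \gamma\bigl(\sum_{j\in S} x_j^{(S)} + \ell\bigr)$, the constant $r - c_i/\gamma$ multiplied by $\Gamma^{(S,\mathbf{x}^{(S)})}$ becomes $(\gamma r - c_i)\bigl(\sum_{j\in S} x_j^{(S)} + \ell\bigr)$, which combines with the $-(\gamma r - c_i)\,x_i^{(S)}$ inherited from the reward--cost term to leave
\[
a^{(S)} = (\gamma r - c_i)\,\frac{\sum_{j\in S\setminus\{i\}} x_j^{(S)} + \ell}{D^{(S,\mathbf{x})}} .
\]
The key observation is that the player's own strategy $x_i^{(S)}$ drops out, and because $\ell > 0$ the remaining fraction is strictly positive in every state irrespective of $\mathbf{x}$. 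Hence $a^{(S)}$ carries the same strict sign as $\gamma r - c_i$, uniformly over all $S$.

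It then remains to transfer this uniform sign from $\mathbf{a}$ to $\mathbf{U}_i$. Since $\mathbf{W}^{(\mathbf{x})}$ is strictly substochastic (as established in the proof of \Cref{prop:convergence}), the identical Neumann-series argument yields $\mathbf{U}_i^{(\mathbf{x})} = (\mathbf{I}-\mathbf{W}^{(\mathbf{x})})^{-1}\mathbf{a} = \bigl(\sum_{\eta\ge 0}(\mathbf{W}^{(\mathbf{x})})^{\eta}\bigr)\mathbf{a}$. As $\mathbf{W}^{(\mathbf{x})}$ is entrywise nonnegative, this inverse is entrywise nonnegative with every diagonal entry at least $1$ (from the $\eta=0$ term). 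Applied to a strictly positive $\mathbf{a}$ it gives $U_i^{(S,\mathbf{x})} \ge a^{(S)} > 0$ for all $S$ when $\gamma r > c_i$, and symmetrically $U_i^{(S,\mathbf{x})} < 0$ when $\gamma r < c_i$; translating back through $R_i^{(S,\mathbf{x})} = (r - c_i/\gamma) - U_i^{(S,\mathbf{x})}$ delivers the two strict inequalities. I expect the only delicate part to be the algebraic cancellation in the second paragraph, specifically confirming that $x_i^{(S)}$ cancels and that the strict positivity of $\ell$ is what secures the strictness of the bound; once that is in hand, the sign propagation is a direct reuse of the already-proved Neumann expansion.
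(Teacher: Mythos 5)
Your argument is correct, and it reaches the lemma by a genuinely different route than the paper. The paper factors $\mathbf{Z}_i^{(\mathbf{x})} = \mathbf{Y}^{(\mathbf{x})}\mathbf{V}_i^{(\mathbf{x})}$, where $\mathbf{Y}^{(\mathbf{x})}$ is the diagonal matrix with entries $\Gamma^{(S,\mathbf{x}^{(S)})}/D^{(S,\mathbf{x})}$ and $V_i^{(S,\mathbf{x}^{(S)})} = (r-\tfrac{c_i}{\gamma})\tfrac{x_i^{(S)}}{\sum_{j\in S}x_j^{(S)}+\ell}$ is the no-transition value; it then proves via a maximum-principle contradiction that $(\mathbf{I}-\mathbf{W}^{(\mathbf{x})})^{-1}\mathbf{Y}^{(\mathbf{x})}$ is (sub)stochastic, so that each $R_i^{(S,\mathbf{x})}$ is a nonnegative combination, with weights summing to at most $1$, of quantities each strictly on the correct side of $r-\tfrac{c_i}{\gamma}$. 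You instead shift by the candidate bound, verify the cancellation $K\,\Gamma^{(S,\mathbf{x}^{(S)})} - (\gamma r - c_i)x_i^{(S)} = (\gamma r-c_i)(\sum_{j\in S\setminus\{i\}}x_j^{(S)}+\ell)$ (which is exactly right, and where $\ell>0$ indeed supplies the strictness), and propagate the uniform sign of the source through the entrywise-nonnegative Neumann series. The two proofs rest on the same two facts --- $\mathbf{W}^{(\mathbf{x})}+\mathbf{Y}^{(\mathbf{x})}$ is stochastic (this is what your cancellation of the constant $K$ secretly uses) and $(\mathbf{I}-\mathbf{W}^{(\mathbf{x})})^{-1}$ is entrywise nonnegative --- but your packaging avoids the separate contradiction argument for $\|(\mathbf{I}-\mathbf{W}^{(\mathbf{x})})^{-1}\mathbf{Y}^{(\mathbf{x})}\mathbf{1}\|_\infty\le 1$ entirely, and it yields the slightly stronger quantitative bound $R_i^{(S,\mathbf{x})} \le (r-\tfrac{c_i}{\gamma}) - a^{(S)}$ with $a^{(S)} = (\gamma r-c_i)\tfrac{\sum_{j\in S\setminus\{i\}}x_j^{(S)}+\ell}{D^{(S,\mathbf{x})}}$ in the case $\gamma r > c_i$ (and the mirror-image bound when $\gamma r < c_i$). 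The only point worth making explicit in a final write-up is that $\mathbf{U}_i^{(\mathbf{x})}$ is the \emph{unique} solution of its affine recursion, which follows because $\mathbf{I}-\mathbf{W}^{(\mathbf{x})}$ is invertible, so the shifted vector built from the $\mathbf{R}_i^{(\mathbf{x})}$ of \Cref{prop:convergence} must coincide with $(\mathbf{I}-\mathbf{W}^{(\mathbf{x})})^{-1}\mathbf{a}$.
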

%\vspace{-2mm}
\begin{proof}

Let ${V}_i^{(S,\mathbf{x}^{(S)})}$ be the expected utility of player $i$ in state $S$ computed without considering the arrivals and departures of players ($\lambda_j=0, \forall j \notin S$ and $\mu_j=0, \forall j \in S$). So, we have

\vspace{-2mm}
\begin{small}
\begin{align*}
V_i^{(S,\mathbf{x}^{(S)})}
=
(\gamma r \!-\! c_i)  \frac{x_i^{(S)}}{\gamma \left( \sum_{j \in S} x_j^{(S)} + \ell \right)}
= \left(  r \!-\! \frac{c_i}{\gamma} \right) \! \frac{x_i^{(S)}}{  \sum_{j \in S} x_j^{(S)} + \ell }
\end{align*}
\end{small} 
%\vspace{-2mm}

Let $\mathbf{V}_i^{(\mathbf{x})}$ be the vector whose  component   $\mathcal{O}(S)$ is ${V}_i^{(S,\mathbf{x}^{(S)})}$.
Let $\mathbf{Z}_i^{(\mathbf{x})} = \mathbf{Y}^{(\mathbf{x})} \mathbf{V}_i^{(\mathbf{x})}$.
Note that when $\Gamma^{(S,\mathbf{x}^{(S)})} =  \gamma \left( \sum_{j \in S} x_j^{(S)} +\ell \right)$, we have that $\mathbf{Y}^{(\mathbf{x})}$ is a diagonal matrix, with diagonal elements
${Y}^{(\mathbf{x})}(S,S) = 
\frac{\gamma \left( \sum_{j\in S} x_j^{(S)} +\ell \right)}{D^{(S,\mathbf{x})}}
=
\frac{\gamma \left( \sum_{j\in S} x_j^{(S)} +\ell \right)}{\sum_{j \notin S} \lambda_j + \sum_{j\in S} \mu_j + \gamma \left( \sum_{j\in S} x_j^{(S)} +\ell \right)}$.
It can hence be seen that
$\mathbf{W}^{(\mathbf{x})} + \mathbf{Y}^{(\mathbf{x})}$ is a stochastic matrix (the sum of elements in each of its rows is 1).

Let $\mathbf{U}^{(\mathbf{x})} = (\mathbf{I}-\mathbf{W}^{(\mathbf{x})})^{-1} \mathbf{Y}^{(\mathbf{x})} \mathbf{1}$, where $\mathbf{1}$ is the vector whose each element is $1$.
It is clear that all the elements of $\mathbf{U}^{(\mathbf{x})}$ are non-negative.
We will now show that $||\mathbf{U}^{(\mathbf{x})}||_\infty \leq 1$, that is, the maximum element of the vector $\mathbf{U}^{(\mathbf{x})}$ is not more than 1.
Let $u_{S_0}$ be the element with the maximum value (one of the maximum, if there are multiple).
Suppose $u_{S_0}^{(\mathbf{x})} = ||\mathbf{U}^{(\mathbf{x})}||_\infty > 1$.
So, we would have

~
\vspace{-3mm}
\begin{small}
\begin{align*}
& \mathbf{U}^{(\mathbf{x})} = (\mathbf{I}-\mathbf{W}^{(\mathbf{x})})^{-1} \mathbf{Y}^{(\mathbf{x})} \mathbf{1}
\\
\implies &
\mathbf{U}^{(\mathbf{x})} = \mathbf{W}^{(\mathbf{x})} \mathbf{U}^{(\mathbf{x})} + \mathbf{Y}^{(\mathbf{x})} \mathbf{1}
\\
\implies
& u_{S_0}^{(\mathbf{x})} = \sum_{S \in 2^{\mathcal{U}}} u_{S}^{(\mathbf{x})} W^{(\mathbf{x})}(S_0,S) + {Y}^{(\mathbf{x})} (S_0,S_0)
\\
\implies
& u_{S_0}^{(\mathbf{x})} < u_{S_0}^{(\mathbf{x})} \sum_{S \in 2^{\mathcal{U}}} W^{(\mathbf{x})}(S_0,S) + u_{S_0}^{(\mathbf{x})} {Y}^{(\mathbf{x})} (S_0,S_0)
\\
&\;\;\;\;\;\;\;\;\;\;\;\;\;\;\;\;\;\;\;\;\;\;\;\;\;\;\;\;\;\;\;\;\;\;\;\;\;\;\;\;\;[\because  \max_S u_{S}^{(\mathbf{x})} = u_{S_0}^{(\mathbf{x})} > 1]
\\
\implies
& \sum_{S \in 2^{\mathcal{U}}} W^{(\mathbf{x})}(S_0,S) + {Y}^{(\mathbf{x})} (S_0,S_0) > 1
\end{align*}
\end{small}
%\vspace{-2mm}

However, this is a contradiction since $\mathbf{W}^{(\mathbf{x})} + \mathbf{Y}^{(\mathbf{x})}$ is a stochastic matrix.
So, we have shown that 
$||\mathbf{U}^{(\mathbf{x})}||_\infty = ||(\mathbf{I}-\mathbf{W}^{(\mathbf{x})})^{-1} \mathbf{Y}^{(\mathbf{x})} \mathbf{1}||_\infty \leq 1$.
That is, $(\mathbf{I}-\mathbf{W}^{(\mathbf{x})})^{-1} \mathbf{Y}^{(\mathbf{x})}$ is either stochastic or substochastic.

\noindent
From \Cref{prop:convergence}, 
$\mathbf{R}_i^{(\mathbf{x})} \!=\! (\mathbf{I}-\mathbf{W}^{(\mathbf{x})})^{-1} \mathbf{Y}^{(\mathbf{x})} \mathbf{V}_i^{(\mathbf{x})}$.
Since $(\mathbf{I}-\mathbf{W}^{(\mathbf{x})})^{-1} \mathbf{Y}^{(\mathbf{x})}$ is stochastic or substochastic, $R_i^{(S,\mathbf{x})}$ for each $S$ is a linear combination (with weights summing to less than or equal to 1) of ${V}_i^{(S,\mathbf{x}^{(S)})}$ over all $S \in 2^\mathcal{U}$.

For each $S$,
$V_i^{(S,\mathbf{x}^{(S)})} = \left(  r - \frac{c_i}{\gamma} \right) \! \frac{x_i^{(S)}}{\sum_{j\in S} x_j^{(S)}+\ell}$. So we have 
$V_i^{(S,\mathbf{x}^{(S)})} \!<\! r - \frac{c_i}{\gamma} \text{ if } \gamma  r \!>\! c_i, \text{ and } V_i^{(S,\mathbf{x}^{(S)})} \!>\! r - \frac{c_i}{\gamma} \text{ if } \gamma  r \!<\! c_i$.
Since
$R_i^{(S,\mathbf{x})}$ for each $S$ is a linear combination (with  weights summing to less than or equal to 1) of ${V}_i^{(S,\mathbf{x}^{(S)})}$ over all $S \in 2^\mathcal{U}$,
we have  
$R_i^{(S,\mathbf{x})} \!<\! r - \frac{c_i}{\gamma}$ if $\gamma r \!>\! c_i$, and  $R_i^{(S,\mathbf{x})} \!>\! r - \frac{c_i}{\gamma}$ if $\gamma r \!<\! c_i$.
\end{proof}

\begin{lemma}
{In Scenario$\,$1, $\!R_i^{(S,\mathbf{x})}\!\!$ is a monotone function of $x_i^{(S)}\!\!$.}
\label{lem:monotone}
\end{lemma}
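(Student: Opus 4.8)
The plan is to exploit the recursion in Equation~(\ref{eqn:recursive_Sc1}) and observe that, viewed as a function of the single coordinate $x_i^{(S)}$ (with every other entry of the policy profile, and hence all continuation utilities of the neighbouring states, held fixed), the right-hand side is a linear-fractional (M\"obius) function. First I would substitute $\Gamma^{(S,\mathbf{x}^{(S)})} = \gamma\left(\sum_{j\in S} x_j^{(S)} + \ell\right)$ so that Equation~(\ref{eqn:recursive_Sc1}) applies, and then collect the terms that involve $x := x_i^{(S)}$.

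To this end, write $P = \sum_{j \in S\setminus\{i\}} x_j^{(S)} + \ell$, $\Lambda = \sum_{j\notin S}\lambda_j$, $M = \sum_{j\in S}\mu_j$, and let $N = \sum_{j\notin S}\lambda_j R_i^{(S\cup\{j\},\mathbf{x})} + \sum_{j\in S}\mu_j R_i^{(S\setminus\{j\},\mathbf{x})}$ denote the aggregate continuation term, which I treat as independent of $x$. Then the denominator becomes $D^{(S,\mathbf{x})} = \gamma(P + x) + \Lambda + M = B + \gamma x$ with $B := \gamma P + \Lambda + M$, and the numerator is the affine quantity $(\gamma r - c_i)x + N$. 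Hence
\[
R_i^{(S,\mathbf{x})} = \frac{(\gamma r - c_i)\, x + N}{B + \gamma x}.
\]
With this structure the monotonicity is immediate: differentiating in $x$ gives $\frac{d R_i^{(S,\mathbf{x})}}{dx} = \frac{(\gamma r - c_i)B - \gamma N}{(B + \gamma x)^2}$, whose numerator $(\gamma r - c_i)B - \gamma N$ does not depend on $x$. Therefore the derivative keeps a constant sign over the whole feasible range $x \ge 0$, so $R_i^{(S,\mathbf{x})}$ is monotone (non-decreasing if $(\gamma r - c_i)B \ge \gamma N$, non-increasing otherwise). Because $\ell>0$ and $\gamma>0$ force $B>0$, the denominator $B+\gamma x$ stays strictly positive on $x\ge 0$, so there is no pole in the feasible region and the constant-sign derivative yields monotonicity on the entire admissible interval rather than merely on a subinterval.

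The main point to get right — rather than any computational difficulty — is the justification for holding the neighbouring-state utilities $R_i^{(S\cup\{j\},\mathbf{x})}$ and $R_i^{(S\setminus\{j\},\mathbf{x})}$ fixed while varying $x_i^{(S)}$. This is precisely the state-by-state optimization underlying the Bellman/policy-iteration computation described just before the lemma: when determining player $i$'s best response in state $S$, the continuation values are taken as given, so $N$ is a constant and the linear-fractional form above is exact. Guaranteeing $B>0$ (to rule out a sign flip across a pole) is the only remaining detail, and it follows directly from $\ell>0$. Everything else reduces to the one-line derivative computation, and I would emphasize in the write-up that the constancy of the numerator $(\gamma r - c_i)B - \gamma N$ in $x$ is exactly what makes the bang-bang characterization of the best response (invest zero or maximal power) fall out in the subsequent MPE derivation.
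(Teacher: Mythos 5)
Your proposal is correct and follows essentially the same route as the paper: the paper writes $R_i^{(S,\mathbf{x})} = \bigl(A_i^{(S,\mathbf{x})} + (\gamma r - c_i)x_i^{(S)}\bigr)/\bigl(E_i^{(S,\mathbf{x}^{(S)})} + \gamma x_i^{(S)}\bigr)$ with $A$ and $E$ playing exactly the roles of your $N$ and $B$, differentiates, and notes that the numerator of the derivative is constant in $x_i^{(S)}$ while the denominator is a positive square. Your explicit remark that the continuation values are held fixed via the Bellman/state-by-state optimization matches the paper's (implicit) treatment, which uses the fixed continuation values $\hat{R}_i$ in defining $A_i^{(S,\mathbf{x})}$.
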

%\vspace{-2mm}

%\vspace{-2mm}
\begin{proof}

We define the following for simplifying  notation.

\vspace{-2mm}
\begin{small}
\begin{align*}
A_i^{(S,\mathbf{x})} = \sum_{j\notin S} \lambda_j \hat{R}_i^{(S \cup \{j\},\mathbf{x})} \!+\! \sum_{j\in S} \mu_j \hat{R}_i^{(S\setminus \{j\},\mathbf{x})}
\end{align*}
\end{small}
\vspace{-5mm}

\begin{small}
\begin{align*}
E_i^{(S,\mathbf{x}^{(S)})} = \sum_{j\notin S} \lambda_j  \!+\! \sum_{j\in S} \mu_j \!+\! \gamma \Bigg( \! \sum_{j\in S\setminus \{i\}} x_j^{(S)} +\ell \Bigg)
\end{align*}
\end{small}
%\vspace{-2mm}

\noindent
Hence, we can write

\vspace{-2mm}
\begin{small}
\begin{align*}
R_i^{(S,\mathbf{x})} = \frac{A_i^{(S,\mathbf{x})} + (\gamma r - c_i) x_i^{(S)}}{E_i^{(S,\mathbf{x}^{(S)})} + \gamma x_i^{(S)}}
\end{align*}
\end{small}
%\vspace{-2mm}

\vspace{-4mm}
\begin{small}
\begin{align*}
\Longrightarrow
\frac{d R_i^{(S,\mathbf{x})} }{d x_i^{(S)}} 
&= \frac{(\gamma r - c_i)E_i^{(S,\mathbf{x}^{(S)})} - \gamma A_i^{(S,\mathbf{x})}}{\left(E_i^{(S,\mathbf{x}^{(S)})} +\gamma x_i^{(S)} \right)^2}
\end{align*}
\end{small}
%\vspace{-2mm}

The denominator is positive, while the numerator is a constant with respect to $x_i^{(S)}$, since $A_i^{(S,\mathbf{x})  }$ and $E_i^{(S,\mathbf{x}^{(S)})}$ do not depend on $x_i^{(S)}$.
So, $R_i^{(S,\mathbf{x})}$ is a monotone function of $x_i^{(S)}$, and whether it is increasing or decreasing, depends on the sign of the numerator: $(\gamma r - c_i)E_i^{(S,\mathbf{x}^{(S)})} - \gamma A_i^{(S,\mathbf{x})}$.
\end{proof}

\begin{proposition}
In MPE for Scenario 1, a player $i$ invests its maximal power if $\gamma r > c_i$, no power if $\gamma r < c_i$, and any amount of power if $\gamma r = c_i$.
\label{prop:sc1result}
\end{proposition}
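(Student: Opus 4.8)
The plan is to leverage the two preceding lemmas directly, so that almost no fresh computation is needed. \Cref{lem:monotone} already establishes that $R_i^{(S,\mathbf{x})}$ is monotone in $x_i^{(S)}$, with the direction of monotonicity governed by the sign of the derivative's numerator $N := (\gamma r - c_i)E_i^{(S,\mathbf{x}^{(S)})} - \gamma A_i^{(S,\mathbf{x})}$. If $N>0$, the best response is to invest maximal power; if $N<0$, it is to invest no power; and if $N=0$, the utility does not depend on $x_i^{(S)}$, so any investment is a best response. The whole proof therefore reduces to pinning down the sign of $N$ in each of the three regimes $\gamma r > c_i$, $\gamma r < c_i$, and $\gamma r = c_i$.

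First I would introduce the shorthand $\Lambda := \sum_{j\notin S}\lambda_j + \sum_{j\in S}\mu_j$ and record two structural facts. Since $A_i^{(S,\mathbf{x})}$ is a combination of the equilibrium utilities $\hat R_i$ at the neighboring states with nonnegative weights $\lambda_j,\mu_j$ whose total is exactly $\Lambda$, a uniform bound on each $\hat R_i$ immediately bounds $A_i^{(S,\mathbf{x})}$. And since $E_i^{(S,\mathbf{x}^{(S)})} = \Lambda + \gamma\big(\sum_{j\in S\setminus\{i\}}x_j^{(S)} + \ell\big)$ with $\gamma\ell>0$, we have the strict inequality $E_i^{(S,\mathbf{x}^{(S)})} > \Lambda$.

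The crux is then to invoke \Cref{lem:bounded}, which holds for \emph{every} policy profile and hence in particular for the equilibrium profile, to bound the $\hat R_i$ terms inside $A_i^{(S,\mathbf{x})}$. In the regime $\gamma r > c_i$ we have $r - c_i/\gamma > 0$ and $\hat R_i < r - c_i/\gamma$ at every state, giving $A_i^{(S,\mathbf{x})} < (r - c_i/\gamma)\Lambda$; combined with $(\gamma r - c_i)E_i^{(S,\mathbf{x}^{(S)})} = \gamma(r - c_i/\gamma)E_i^{(S,\mathbf{x}^{(S)})} > \gamma(r - c_i/\gamma)\Lambda$, this yields $N>0$ and maximal investment. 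The regime $\gamma r < c_i$ is the mirror image: now $r - c_i/\gamma < 0$, each $\hat R_i > r - c_i/\gamma$ gives $A_i^{(S,\mathbf{x})} > (r-c_i/\gamma)\Lambda$, and multiplying $E_i^{(S,\mathbf{x}^{(S)})} > \Lambda$ by the negative factor reverses the first inequality, so $N<0$ and zero investment. For $\gamma r = c_i$ the first term of $N$ vanishes, and since $V_i^{(S,\mathbf{x}^{(S)})}\equiv 0$ forces (via \Cref{prop:convergence}) all equilibrium utilities and hence $A_i^{(S,\mathbf{x})}$ to be zero, we get $N=0$ and indifference.

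I expect the main obstacle to be bookkeeping of inequality directions in the $\gamma r < c_i$ case, where $r - c_i/\gamma$ is negative and reverses both of the inequalities that get combined; one must check that the strictness inherited from $E_i^{(S,\mathbf{x}^{(S)})} > \Lambda$ survives. A secondary point worth spelling out explicitly is the justification that \Cref{lem:bounded}'s bound legitimately applies to the equilibrium utilities $\hat R_i$ appearing in $A_i^{(S,\mathbf{x})}$, which is immediate because that lemma is stated for an arbitrary policy profile.
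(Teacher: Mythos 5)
Your proof is correct, and it departs from the paper's in the one step that matters. Both arguments reduce the claim, via \Cref{lem:monotone}, to determining the sign of the numerator $N = (\gamma r - c_i)E_i^{(S,\mathbf{x}^{(S)})} - \gamma A_i^{(S,\mathbf{x})}$, and both ultimately rest on \Cref{lem:bounded}. The paper, however, settles the sign by an exact algebraic identity: writing $A_i^{(S,\mathbf{x})} = \bigl(E_i^{(S,\mathbf{x}^{(S)})}+\gamma x_i^{(S)}\bigr)W^{(S,\mathbf{x})}\hat{R}_i^{(\mathbf{x})}$ and using the recursion $W^{(S,\mathbf{x})}\hat{R}_i^{(\mathbf{x})} = \hat{R}_i^{(S,\mathbf{x})} - Z_i^{(S,\mathbf{x})}$, it factors the numerator as $N = \gamma\bigl(r - \tfrac{c_i}{\gamma} - \hat{R}_i^{(S,\mathbf{x})}\bigr)\bigl(E_i^{(S,\mathbf{x}^{(S)})} + \gamma x_i^{(S)}\bigr)$ and then applies \Cref{lem:bounded} to the \emph{current} state's utility. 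You instead estimate the two terms of $N$ separately: \Cref{lem:bounded} applied to the \emph{neighboring} states' utilities bounds $A_i^{(S,\mathbf{x})}$ by $(r - c_i/\gamma)\Lambda$ from the appropriate side, and the structural inequality $E_i^{(S,\mathbf{x}^{(S)})} > \Lambda$ (which uses $\ell>0$) supplies the strictness. Your route is more elementary, avoiding the identity relating $A_i^{(S,\mathbf{x})}$ to the row of $\mathbf{W}^{(\mathbf{x})}$ and the ensuing chain of simplifications; the paper's factorization is sharper, exhibiting the sign of the derivative as exactly that of $r - \tfrac{c_i}{\gamma} - \hat{R}_i^{(S,\mathbf{x})}$. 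Two minor points to tidy: your strict inequality on $A_i^{(S,\mathbf{x})}$ degenerates to the equality $0=0$ when $\Lambda = 0$, but, as you anticipated, the strictness of $N\neq 0$ is carried entirely by $E_i^{(S,\mathbf{x}^{(S)})} > \Lambda$ (indeed by $\gamma\ell>0$), so nothing breaks; and, like the paper, you should close by noting that since the sign of $N$ is the same under every policy profile (\Cref{lem:bounded} holds for arbitrary $\mathbf{x}$), investing $\overline{x}_i$ (resp.\ $0$) is a dominant strategy in every state, which gives a consistent solution of the Bellman equations and hence an MPE.
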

%\vspace{-3mm}

\begin{proof}

Let $W^{(S,\mathbf{x})}$ be the  row  $\mathcal{O}(S)$ of $\mathbf{W}^{(\mathbf{x})}$.
Note that $A_i^{(S,\mathbf{x})} \!=\!    (E_i^{(S,\mathbf{x}^{(S)})} \!+ \gamma x_i^{(S)}) W^{(S,\mathbf{x})} \hat{R}_i^{(\mathbf{x})}$.
From the proof of \Cref{lem:monotone},
$\frac{d R_i^{(S,\mathbf{x})}}{d x_i^{(S)}}\!$ has  same sign as 
$(\gamma r - c_i) E_i^{(S,\mathbf{x}^{(S)})} \!- \gamma A_i^{(S,\mathbf{x})}\!$,
{which can  be written as}

~
\vspace{-3mm}
\begin{small}
\begin{align*}
& %= 
(\gamma r - c_i) E_i^{(S,\mathbf{x}^{(S)})} -  \gamma (E_i^{(S,\mathbf{x}^{(S)})} + \gamma x_i^{(S)})  W^{(S,\mathbf{x})} \hat{R}_i^{(\mathbf{x})}
\\
&= (\gamma r - c_i) E_i^{(S,\mathbf{x}^{(S)})} - \gamma (E_i^{(S,\mathbf{x}^{(S)})} + \gamma x_i^{(S)}) (\hat{R}_i^{(S,\mathbf{x})} - Z_i^{(S,\mathbf{x})}) 
\\
&= (\gamma r - c_i) E_i^{(S,\mathbf{x}^{(S)})} \!-\! \gamma \hat{R}_i^{(S,\mathbf{x})} (E_i^{(S,\mathbf{x}^{(S)})} \!+\! \gamma x_i^{(S)}) 
\\ &\;\;\;\;\;   
\!+\! \gamma \frac{(\gamma r - c_i) x_i^{(S)}}{E_i^{(S,\mathbf{x}^{(S)})} \!+\! \gamma x_i^{(S)}} (E_i^{(S,\mathbf{x}^{(S)})} \!+\! \gamma x_i^{(S)})
\\
&= (\gamma r \!-\! c_i) E_i^{(S,\mathbf{x}^{(S)})} \!-\! \gamma \hat{R}_i^{(S,\mathbf{x})} (E_i^{(S,\mathbf{x}^{(S)})} \!+\! \gamma x_i^{(S)}) \!+\! \gamma (\gamma r \!-\! c_i) x_i^{(S)}
\\
&= (\gamma r \!-\! c_i ) E_i^{(S,\mathbf{x}^{(S)})} \!-\! \gamma \hat{R}_i^{(S,\mathbf{x})} E_i^{(S,\mathbf{x}^{(S)})} \!+\! \gamma x_i^{(S)} (\gamma r \!-\! c_i \!-\! \gamma \hat{R}_i^{(S,\mathbf{x})})
\\
&= E_i^{(S,\mathbf{x}^{(S)})} (\gamma r -c_i  - \gamma \hat{R}_i^{(S,\mathbf{x})}) + \gamma x_i^{(S)} (\gamma r -c_i - \gamma \hat{R}_i^{(S,\mathbf{x})})
\\
&= (\gamma r -c_i  - \gamma \hat{R}_i^{(S,\mathbf{x})}) (E_i^{(S,\mathbf{x}^{(S)})} + \gamma x_i^{(S)})
\\
&= \gamma \Big(r - \frac{c_i}{\gamma} - \hat{R}_i^{(S,\mathbf{x})} \Big)(E_i^{(S,\mathbf{x}^{(S)})} + \gamma x_i^{(S)})
\end{align*} 
\end{small}
%\vspace{-3mm}

\noindent
Since $E_i^{(S,\mathbf{x}^{(S)})} + \gamma x_i^{(S)}$ is positive, and 
\mbox{$(r - \frac{c_i}{\gamma} - \hat{R}_i^{(S,\mathbf{x})} )$} has the same sign as  $(\gamma r -c_i)$
from \Cref{lem:bounded}, we have that $\frac{d R_i^{(S,\mathbf{x})}}{d x_i^{(S)}}$ has the same sign as $(\gamma r - c_i)$.
Also, note that if $\gamma r=c_i$, we have $R_i^{(S,\mathbf{x})}=0, \forall S \in 2^\mathcal{U}$  from Proposition~\ref{prop:convergence} when 
$\Gamma^{(S,\mathbf{x}^{(S)})} =  \gamma \left( \sum_{j \in S} x_j^{(S)} +\ell \right)$.

So, in any state $S$,
it is a dominant strategy for a player $i$ to invest its maximal power if $\gamma r > c_i$, no power if $\gamma r < c_i$, and any amount of power if $\gamma r = c_i$.
Since  the maximal power of a player $i$ would be  bounded (let the bound be $\overline{x}_i$), it would invest $\overline{x}_i$ if $\gamma r > c_i$.
Hence, we have a consistent solution for the Bellman equations that a player $i$ invests $\overline{x}_i$ if $\gamma r > c_i$, $0$ if $\gamma r < c_i$, and any amount of power in the range $[0,\overline{x}_i]$ if $\gamma r = c_i$.
\end{proof}

Thus,  the MPE strategy of a player follows a threshold policy, with a threshold on its cost parameter $c_i$ (whether it is lower than $\gamma r$) or alternatively, a {threshold on the offered reward $r$ (whether it is higher than $\frac{c_i}{\gamma}$).}
Note that though a player $i$ invests
maximal power when $\gamma r > c_i$, this is not inefficient since the power would be spent for less time as the problem would get solved faster.
An intuition behind this result is that, when there are several miners in the system, the competition drives  miners to invest heavily.
On the other hand, when there are few miners in the system, miners invest heavily so that  the problem gets solved faster (before  arrival of more competition).
Also, since the MPE strategies do not depend on $S$, the assumption of 
state knowledge can  be relaxed.

We now provide an intuition for why the MPE strategies are independent of the arrival and departure rates.
{From Proposition~\ref{prop:convergence}, $\mathbf{R}_i^{(\mathbf{x})} \!=\! (\mathbf{I}-\mathbf{W}^{(\mathbf{x})})^{-1} \mathbf{Z}_i^{(\mathbf{x})}$.
For $\gamma r>c_i$,} when power $x_i^{(S)}$ increases,  $\mathbf{Z}_i^{(\mathbf{x})}$ increases and $(\mathbf{I}-\mathbf{W}^{(\mathbf{x})})^{-1}$ decreases.
But 
$\mathbf{R}_i^{(\mathbf{x})}$ increases with $x_i^{(S)}$ when $\gamma r > c_i$ (shown in the proof of Proposition~\ref{prop:sc1result}), implying that the rate of increase of $\mathbf{Z}_i^{(\mathbf{x})}$ dominates the rate of decrease of $(\mathbf{I}-\mathbf{W}^{(\mathbf{x})})^{-1}$.
So, the effect of $\mathbf{W}^{(\mathbf{x})}$ and hence  state transitions is relatively weak, thus resulting in Markovian players playing  strategies that are independent of the arrival and departure rates. Similar argument holds for $\gamma r \leq c_i$.
It would be interesting  to study scenarios where the 
rate of problem getting solved
is 
a non-linear function of the players' invested powers. While a linear function is suited to most  distributed computing applications,
a non-linear function could possibly see $\mathbf{W}^{(\mathbf{x})}$ having a strong effect leading to MPE being dependent on the arrival and departure rates.

For analyzing the expected utility of a strategic player $j$, let us consider that the power available to it is very large, say $\overline{x}_j$.
Following our result on MPE, every player $j$ satisfying $c_j < \gamma r$ would invest $\overline{x}_j$ entirely.
So, 
we  have that 
$\gamma(\sum_{j\in S, c_j < \gamma r} \overline{x}_j + \ell)$
is very large, and hence $D^{(S,\mathbf{x})}$ (which now approximates to $\gamma(\sum_{j\in S, c_j < \gamma r} \overline{x}_j + \ell)$) is also very large. 
Since  we know that $R_i^{(S\cup \{j\},\mathbf{x})}$ and $R_i^{(S\setminus \{j\},\mathbf{x})}$ are bounded by a small quantity
from \Cref{lem:bounded}, the limit of the expected utility $R_i^{(S,\mathbf{x})}\!$ computed in any state $S$ (from Equation~(\ref{eqn:recursive_Sc1})) is 
$\frac{\overline{x}_i}{\sum_{j\in S, c_j < \gamma r} \overline{x}_j + \ell}r - \frac{c_i \overline{x}_i}{\gamma (\sum_{j\in S, c_j < \gamma r} \overline{x}_j + \ell)}$.
To get further insight into this, 
say $\ell$ is considered insignificant, that is, the computation is dominated by strategic players. 
Furthermore, say for every strategic player $i$, $c_i < \gamma r$, and let the very large amount of power available to these players be the same ($\overline{x}_i = \overline{x}_j, \forall i,j \in \mathcal{U}$).
Thus, the limit of the expected utility $R_i^{(S,\mathbf{x})}\!$ computed in any state $S$ simplifies to $\frac{r}{|S|} - \frac{c_i}{\gamma |S|}$.
This is intuitive, since if $\overline{x}_i = \overline{x}_j, \forall i,j \in \mathcal{U}$, the reward would be won by the players with equal probability (hence the term $\frac{r}{|S|}$), and the cost is reduced owing to the reduced time due to the combined rate of the problem getting solved (hence the term $\frac{c_i}{\gamma |S|}$).

\vspace{-2mm}
\section{Scenario 2: Analysis of MPE
%\vspace{-1mm}
}
\label{sec:scenario2}

\begin{proposition}
In MPE for Scenario 2, a player $i$ invests 

\vspace{-2mm}
\begin{small}
\begin{align*}
x_i^{(S)} 
&= \max \left\{ \psi^{(S)} \Big( 1 - \frac{c_i \psi^{(S)} }{  r \beta} \Big) , 0 \right\}
\end{align*}
\end{small}
%\vspace{-1.5mm}

\noindent
where 
$
\psi^{(S)} = r \beta \frac{|\hat{S}|-1 + \sqrt{ (|\hat{S}|-1)^2 + \frac{4\ell}{r \beta}\!  \sum_{j \in \hat{S}} c_j }}{2  \sum_{j \in \hat{S}} c_j}
$.
Here, $\hat{S}$ is the maximal set of players $j \in S$ which collectively satisfy the constraints $c_j < \frac{r \beta}{\psi^{(S)}}$.
\label{prop:sc2result}
\end{proposition}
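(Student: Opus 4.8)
The plan is to exploit the crucial structural difference between the two scenarios: in Scenario~2 the sojourn denominator $D^{(S,\mathbf{x})} = \beta + \sum_{j\notin S}\lambda_j + \sum_{j\in S}\mu_j$ is a \emph{constant}, completely independent of the invested powers, whereas in Scenario~1 it varied with $\sum_j x_j^{(S)}$. First I would set up the per-state Bellman optimization exactly as in \Cref{sec:scenario1}: holding the continuation values $\hat R_i^{(S\cup\{j\},\mathbf{x})}$ and $\hat R_i^{(S\setminus\{j\},\mathbf{x})}$ fixed (as one does in policy--value iteration), the only part of $R_i^{(S,\mathbf{x})}$ in Equation~(\ref{eqn:recursive_Sc2}) that depends on $x_i^{(S)}$ is the immediate-profit term, and moreover it is divided by the \emph{constant} $D^{(S,\mathbf{x})}$. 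Hence maximizing $\hat R_i^{(S,\mathbf{x})}$ over $x_i^{(S)}$ is equivalent to maximizing the single-state objective
\begin{small}
\begin{equation*}
g_i\big(x_i^{(S)}\big) = \frac{x_i^{(S)}}{x_i^{(S)} + B_i}\, r\beta - c_i x_i^{(S)}, \qquad B_i := \ell + \!\!\sum_{j\in S\setminus\{i\}}\!\! x_j^{(S)},
\end{equation*}
\end{small}
which is precisely a Tullock-type proportional-share contest with heterogeneous linear costs. This decoupling is what makes the state-by-state analysis tractable.

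Next I would verify that $g_i$ is strictly concave on $x_i^{(S)}\ge 0$ (its second derivative is $-2 r\beta B_i/(x_i^{(S)}+B_i)^3<0$ since $\ell>0$), so the best response is either the unique interior stationary point or the corner $x_i^{(S)}=0$. Solving the first-order condition $r\beta B_i/(x_i^{(S)}+B_i)^2 = c_i$ gives the best response $x_i^{(S)} = \max\{\sqrt{r\beta B_i/c_i} - B_i,\,0\}$. Because this optimum is finite (the strict concavity already bounds the investment), the maximal-power cap does not bind for reasonable parameters, in sharp contrast to Scenario~1 where the objective was linear in $x_i^{(S)}$; I would note this explicitly.

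To pass from best responses to equilibrium, I would introduce the aggregate $\psi^{(S)} := \ell + \sum_{j\in S} x_j^{(S)}$, so that for any \emph{active} player $i$ we have $x_i^{(S)} + B_i = \psi^{(S)}$. Substituting into the first-order condition and solving for $x_i^{(S)}$ yields $x_i^{(S)} = \psi^{(S)}\big(1 - c_i\psi^{(S)}/(r\beta)\big)$, which is positive exactly when $c_i < r\beta/\psi^{(S)}$ --- this is the membership condition defining the active set $\hat S$. Imposing the accounting identity $\psi^{(S)} = \ell + \sum_{j\in\hat S} x_j^{(S)}$ and substituting the expression for each $x_j^{(S)}$ collapses, after writing $n=|\hat S|$ and $C=\sum_{j\in\hat S} c_j$, to the quadratic $\tfrac{C}{r\beta}(\psi^{(S)})^2 - (n-1)\psi^{(S)} - \ell = 0$; taking its unique positive root reproduces the stated $\psi^{(S)}$, and back-substitution gives the claimed $x_i^{(S)}$.

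The main obstacle is the circular coupling between $\hat S$ and $\psi^{(S)}$: the threshold $r\beta/\psi^{(S)}$ determines which players are active, yet $\psi^{(S)}$ itself is computed from that very active set. I would resolve this by ordering the players of $S$ by increasing cost, $c_{(1)}\le c_{(2)}\le\cdots$, computing $\psi$ from the quadratic for each candidate prefix of size $m$, and showing by a monotonicity argument that there is a unique maximal $m$ for which the induced threshold is consistent (the $m$-th cheapest player stays active while the $(m{+}1)$-th does not). Existence of a pure equilibrium of the concave single-state game is guaranteed a priori; the real work is in proving that this maximal consistent active set is well defined and yields the self-consistent pair $(\hat S,\psi^{(S)})$ asserted in the statement.
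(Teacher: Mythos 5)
Your proposal is correct and follows essentially the same route as the paper: exploit the fact that $D^{(S,\mathbf{x})}$ (and hence the transition structure) is independent of the invested powers so the optimization decouples state by state, verify concavity of the per-state profit, solve the first-order condition, aggregate via $\psi^{(S)}=\sum_{j\in S}x_j^{(S)}+\ell$ to obtain the quadratic, and take its positive root. The only differences are cosmetic --- you phrase the decoupling via fixed continuation values in the Bellman equation where the paper argues via non-negativity of $(\mathbf{I}-\mathbf{W}^{(\mathbf{x})})^{-1}$ applied to $\mathbf{Z}_i^{(\mathbf{x})}$, and your second derivative $-2r\beta B_i/(x_i^{(S)}+B_i)^3$ with $B_i=\ell+\sum_{j\in S\setminus\{i\}}x_j^{(S)}$ is in fact the more precise form of the one the paper states; your closing remarks on resolving the circularity of $\hat S$ by ordering costs match the paper's post-proof construction.
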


\begin{proof}

Recall that since $\mathbf{W}^{(\mathbf{x})}$ is a strictly substochastic matrix, 
\mbox{$(\mathbf{I}-\mathbf{W}^{(\mathbf{x})})^{-1} = \lim_{t \rightarrow \infty} \sum_{\eta=0}^{t-1} (\mathbf{W}^{(\mathbf{x})})^\eta$}.
Since all the elements of $\mathbf{W}^{(\mathbf{x})}$ are non-negative, all the elements of $( \mathbf{W}^{(\mathbf{x})} )^\eta$ also are non-negative for any natural number $\eta$, and hence all the elements of $(\mathbf{I}-\mathbf{W}^{(\mathbf{x})})^{-1}$ are non-negative.
Also, since $\mathbf{R}_i^{(\mathbf{x})} \!=\! (\mathbf{I}-\mathbf{W}^{(\mathbf{x})})^{-1} \mathbf{Z}_i^{(\mathbf{x})}$ (Proposition~\ref{prop:convergence}) and since $\mathbf{W}^{(\mathbf{x})}$ is independent of $x_i^{(S)}$ in this scenario, 
maximizing the components of $\mathbf{Z}_i^{(\mathbf{x})}$ (namely, $Z_i^{(S,\mathbf{x})}$) individually with respect to $x_i^{(S)}$ would  essentially maximize all the elements of $\mathbf{R}_i^{(\mathbf{x})}$.
Now, since $\Gamma^{(S,\mathbf{x}^{(S)})} = \beta$ in this scenario, we have

\vspace{-2mm}
\begin{small}
\begin{align*}
& Z_i^{(S,\mathbf{x})} \!=\! \Bigg( \frac{\beta}{\sum_{j\in S} x_j^{(S)}+\ell} r \!-\!c_i \Bigg)\frac{x_i^{(S)}}{D^{(S,\mathbf{x})}} 
\end{align*}
\end{small} 
%\vspace{-1mm}

\noindent
As $D^{(S,\mathbf{x})}$ is independent of $x_i^{(S)}$ in this scenario, 
it can be  shown that $Z_i^{(S,\mathbf{x})}$ is a concave function with respect to $x_i^{(S)}$ (the second derivative turns out to be $\frac{-2 r \ell \beta}{(\sum_{j\in S} x_j^{(S)} +\ell)^3 D^{(S,\mathbf{x})}}$).
The first order condition $\frac{d Z_i^{(S,\mathbf{x})}}{d x_i^{(S)} }=0$ gives

\vspace{-2mm}
\begin{small}
\begin{align*}
x_i^{(S)} &=  
\Bigg( \sum_{j\in S} x_j^{(S)} + \ell \Bigg) \Bigg( 1 - \frac{c_i}{ r \beta } \Big( \sum_{j\in S}  x_j^{(S)} +\ell \Big) \Bigg)
\end{align*}
\end{small}
%\vspace{-2mm}

\noindent
Let $\psi^{(S)} \!=\! \sum_{j \in {S}} x_j^{(S)} + \ell$. 
As $x_i^{(S)}$ is non-negative, we have

\vspace{-2mm}
\begin{small}
\begin{align}
x_i^{(S)} &=  
\max \left\{ \psi^{(S)} \Big( 1 - \frac{\psi^{(S)}}{r \beta} c_i \Big) , 0 \right\}
\label{eqn:SNE_Sc2}
\end{align}
\end{small}
%\vspace{-2mm}

\noindent
Let $\hat{S} = \{ j \in S : x_j^{(S)} > 0 \}$. We later show how to determine  set $\hat{S}$.
Summing the above over all players in $S$ and then adding $\ell$ on both sides, we get

~
\vspace{-3mm}
\begin{small}
\begin{align*}
\sum_{j \in S}  x_j^{(S)} + \ell = \psi^{(S)} \Bigg(  |\hat{S}| -  \frac{\psi^{(S)}}{r \beta} \sum_{j \in \hat{S}} c_j \Bigg) + \ell
\end{align*}
\end{small}
%\vspace{-1mm}

\noindent
Substituting  $\sum_{j \in S}  x_j^{(S)} + \ell$ as  $\psi^{(S)}$, we get

%\vspace{-3mm}
\begin{small}
\begin{align*}
\frac{1}{r \beta} \sum_{j \in \hat{S}} c_j \left( \psi^{(S)} \right)^2 - (|\hat{S}|-1) \psi^{(S)} - \ell = 0
\end{align*}
\end{small}
%\vspace{-2mm}

\noindent
Solving this equation for positive value of $\psi^{(S)}$, we get

%\vspace{-2mm}
\begin{small}
\begin{align*}
\psi^{(S)} = r \beta \frac{|\hat{S}|-1 + \sqrt{ (|\hat{S}|-1)^2 + \frac{4\ell}{r \beta}  \!  \sum_{j \in \hat{S}} c_j }}{2  \sum_{j \in \hat{S}} c_j}
\end{align*}
\end{small}
%\vspace{-2mm}

\noindent
Substituting this expression for $\psi^{(S)}$ in Equation~(\ref{eqn:SNE_Sc2}) gives the MPE strategy of player $i$.
\end{proof}

\vspace{2mm}
So, $x_i^{(S)} > 0$ iff $c_i < \frac{2  \sum_{j \in \hat{S}} c_j}{|\hat{S}|-1 + \sqrt{ (|\hat{S}|-1)^2 + \frac{4\ell}{r \beta}\!  \sum_{j \in \hat{S}} c_j }} $.
That is,
only players with cost parameters in a relatively low range in a given state, invest.
The constraint implies that if player $i$ invests, then player $j$ with $c_j < c_i$  also invests.
So, there exists a threshold player $\hat{i}$ such that any player $j$ with $c_j > c_{\hat{i}}$ would not invest.
Hence, set $\hat{S}$ can be constructed iteratively (initiating from an empty set) by adding players $j$ from set $S \setminus \hat{S}$ one at a time, in ascending order of $c_j$, until the above constraint is violated for the cost parameter of the newly added player.

To get a better understanding of this result, if the power $\ell$ invested by fixed players is considered insignificant, we have  $\psi^{(S)} = r \beta \frac{|\hat{S}|-1}{\sum_{j\in \hat{S}} c_j}$ and  the condition for $x_i^{(S)}>0$ simplifies to $c_i < \frac{\sum_{j\in \hat{S}} c_j}{|\hat{S}|-1}$.

Furthermore, if the strategic players are homogeneous ($c_i=c_j, \forall i,j \in \mathcal{U}$), the cost constraint is satisfied for all players in $S$ (since $c < \frac{|S|c}{|S|-1}$) and so, all the strategic players invest $\frac{r \beta}{c}\big( \frac{|S|-1}{|S|^2} \big) $. That is, if the computation is dominated by strategic players which are homogeneous, they would invest proportionally to the `reward to cost parameter' ratio in MPE.

Since the transition probabilities, and hence $\mathbf{W}^{(\mathbf{x})}$, are constant w.r.t. players' strategies in this scenario,
a player's MPE utility computed in state $S$ ($R_i^{(S,\mathbf{x})}$) is a linear combination (with constant non-negative weights) of its utilities over all states computed without accounting for state transitions.
Hence, the MPE strategies are independent of the arrival and departure rates.

Note that while the decision regarding whether or not to invest was independent of the cost parameters of the other players in the system in Scenario 1, this decision highly depends on the cost parameters of other players in Scenario 2.

\vspace{-2mm}
\section{Simulation Study
%\vspace{-1mm}
}

Throughout the paper, we determined   MPE strategies, which we observed to be independent of players' arrival and departure rates. 
However, it is clear from Equations~(\ref{eqn:recursive}), (\ref{eqn:recursive_Sc1}), (\ref{eqn:recursive_Sc2}) and \Cref{prop:convergence} that the players' utilities would indeed depend on these rates.
We now study the effects of these rates on the  utilities in MPE using simulations.
In order to reliably obtain an accurate relation between the arrival/departure rates and the expected utilities of the players, we consider that the computation is dominated by the strategic players (that is, the power invested by the fixed players is insignificant: $\ell \rightarrow 0$) and the strategic players are homogeneous (their arrival/departure rates and their cost parameters are the same).
Let $\lambda,\mu,c$ denote the common arrival rate, departure rate, and cost parameter, respectively.
Note that if the strategic players are considered homogeneous, the players' sets (states) can be mapped to their cardinalities.
We observe how the expected utility of a player changes as a function of the number of other players present in the system, for different arrival/departure rates.
In our simulations, we consider the following values:
$r =  10^5, \gamma = \beta = 0.1, |\mathcal{U}| = 10^4, c = 0.003$  (a  justification of
these values is provided in Appendix).

\vspace{1mm}
\noindent
\textbf{Statewise Nash Equilibrium.}
For a comparative study, we also look at the equilibrium strategy profile of a given set of players $S$, when there are no arrivals and departures ($\lambda_j=0, \forall j \notin S$ and $\mu_j=0, \forall j \in S$). 
We call this, {\em statewise\/} Nash equilibrium (SNE) in state $S$.
Since the MPE strategies of the players are independent of the arrival and departure rates, a player's SNE strategy in a state is same as its MPE strategy corresponding to that state.
Note, however, that the expected utilities in SNE would be different from those in MPE, since the expected utilities highly depend on the arrival and departure rates (Equations~(\ref{eqn:recursive}), (\ref{eqn:recursive_Sc1}), (\ref{eqn:recursive_Sc2}) and \Cref{prop:convergence}).
Also, since SNE does not account for  change of the set of players present in the system,
the expected utilities 
in SNE
for different values on X-axis in the plots are computed independently of each other.

\vspace{-2mm}
\subsection{Simulation Results
%\vspace{-.5mm}
}

In Figures \ref{fig:plots_scenario1} and \ref{fig:plots_scenario2},
the plots for  expected utility largely follow near-linear curve (of  negative slope) on  log-log scale, with respect to the number of players in the system.
That is, they nearly follow power law, which means that scaling the number of players by a constant factor would lead to proportionate scaling of  expected utility.

\vspace{1mm}
\noindent
\textbf{Scenario 1.}
\Cref{fig:plots_scenario1} presents  plots for expected utilities with   MPE policy for various values of $\lambda$ and $\mu$, and compares them with  expected utilities in SNE.
Following are some  insights:

%\vspace{-1mm}
\begin{itemize}[leftmargin=*]
\setlength\itemsep{0em}

\item
As  seen at the end of  \Cref{sec:scenario1}, 
if the mining is dominated by strategic players which are homogeneous,
the expected utilities in MPE  are bounded by $\frac{r}{|S|} - \frac{c}{\gamma|S|}$.
It can  be similarly shown that the limit of the players' expected utilities in SNE is $\frac{r}{|S|} - \frac{c}{\gamma|S|}$ (this can be seen by substituting in Equation~(\ref{eqn:recursive_Sc1}): $\lambda_j=0 \; \forall j \notin S$, $\mu_j = 0$, $c_j = c$, $x_j^{(S)} \rightarrow \infty, \forall j \in S$, and $\ell \rightarrow 0$).
Owing to this, the expected utilities in MPE are bounded by the expected utilities in SNE, which is reflected in \Cref{fig:plots_scenario1}.

\item
In Scenario 1, a higher $\lambda$ results in a higher likelihood of the system having more players, which results in a higher rate of the problem getting solved as well as more competition. 
This, in turn, reduces the time spent in the system as well as the probability of winning for each player, which hence reduces the cost incurred as well as the expected reward. 
\Cref{fig:plots_scenario1}(a) suggests that, as $\lambda$ changes, the change in cost incurred balances with the change in expected reward, since the change in expected utility is insignificant.

\item
For a given $\mu$, if the number of players changes, there is a balanced tradeoff between the cost  and the expected reward as above; so the change in expected utility is insignificant.
But a higher $\mu$ results in a higher probability of player $i$ departing from the system and staying out when the problem gets solved, thus lowering its expected utility (\Cref{fig:plots_scenario1}(b)).

\end{itemize}

\noindent
\textbf{Scenario 2.}
Since a player's SNE strategy in a state is same as its MPE strategy corresponding to that state, 
a player's SNE startegy is to invest
$\frac{r \beta}{c}\big( \frac{|S|-1}{|S|^2} \big)$ in state $S$
(as explained at the end of \Cref{sec:scenario2} when  computation is dominated by strategic players that are homogeneous).
Furthermore, in SNE, the expected utility of each player can be shown to be $\frac{r}{|S|^2}$ in state $S$ (this can be seen by substituting in Equation~(\ref{eqn:recursive_Sc2}): $\lambda_j=0 \; \forall j \notin S$, $\mu_j = 0$, $c_j = c$, $x_j^{(S)} = \frac{r \beta}{c}\big( \frac{|S|-1}{|S|^2} \big), \forall j \in S$, and $\ell \rightarrow 0$).
\Cref{fig:plots_scenario2} presents the plots for expected utilities with the analyzed MPE policy for different values of $\lambda$ and $\mu$, and compares them against 
SNE.
Following are some insights:

\begin{figure}
%\vspace{-1mm} 
\begin{tabular}{cc}
\hspace{-4mm}
\includegraphics[width=.27\textwidth]{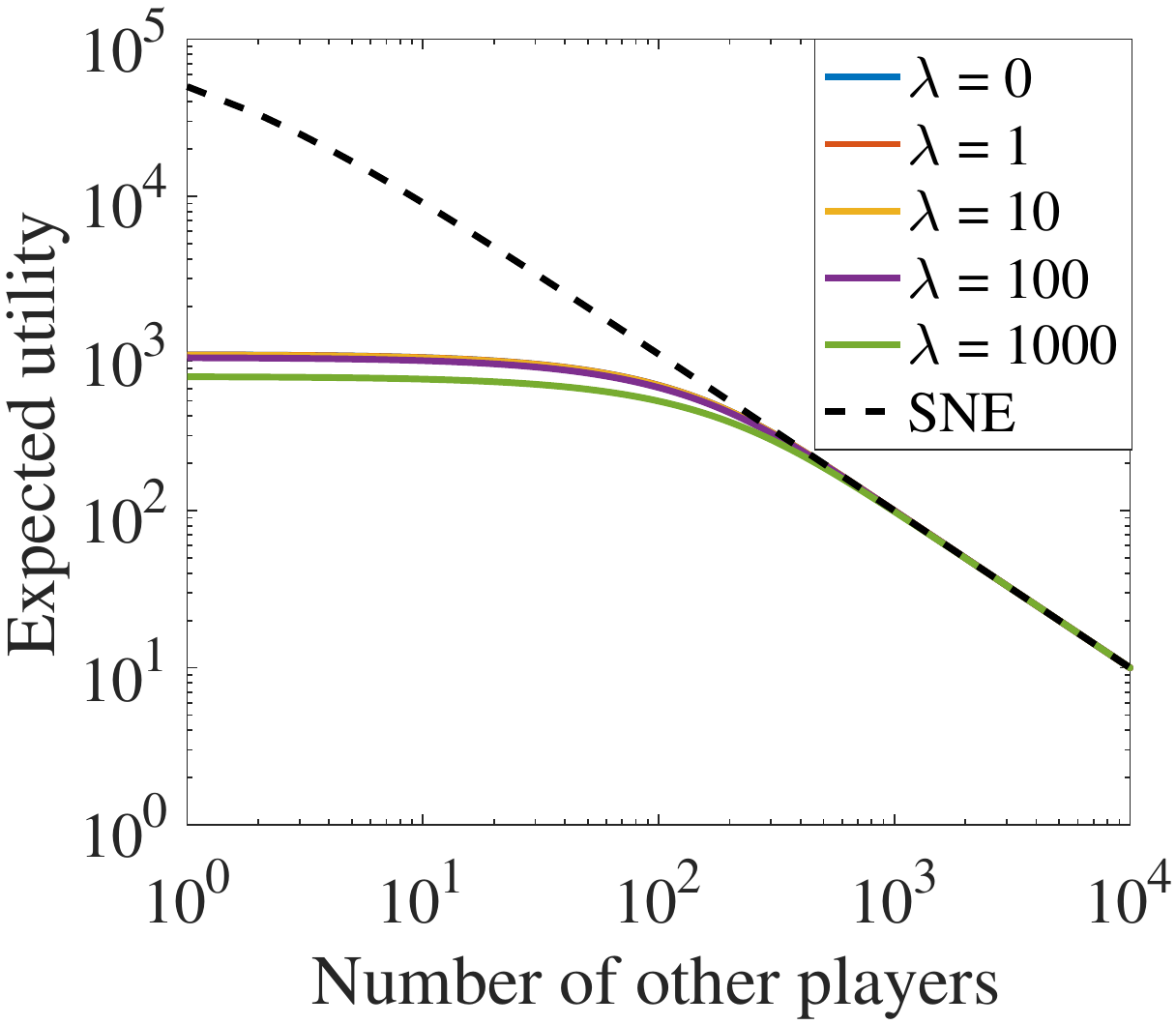}
&
\hspace{-2mm}
\includegraphics[width=.27\textwidth]{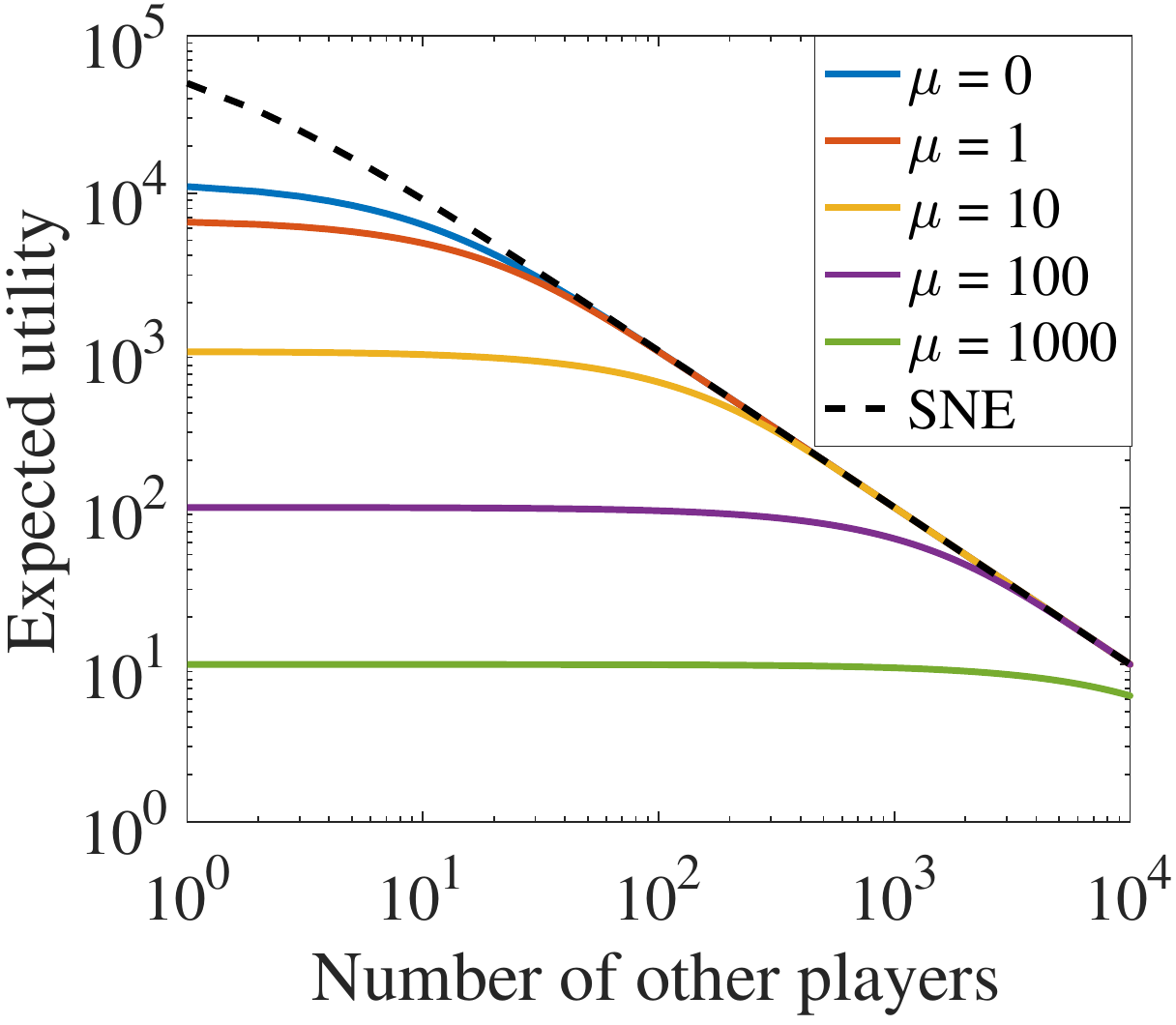}
\\
(a) for different $\lambda$'s ($\mu=10$)
&
(b) for different $\mu$'s ($\lambda=10$)
\end{tabular}
%\vspace{-3mm}
\caption{Expected utility of a player in Scenario 1}
\label{fig:plots_scenario1}
%\vspace{-2mm}
\end{figure}

\begin{figure}
\begin{tabular}{cc}
\hspace{-4mm}
\includegraphics[width=.27\textwidth]{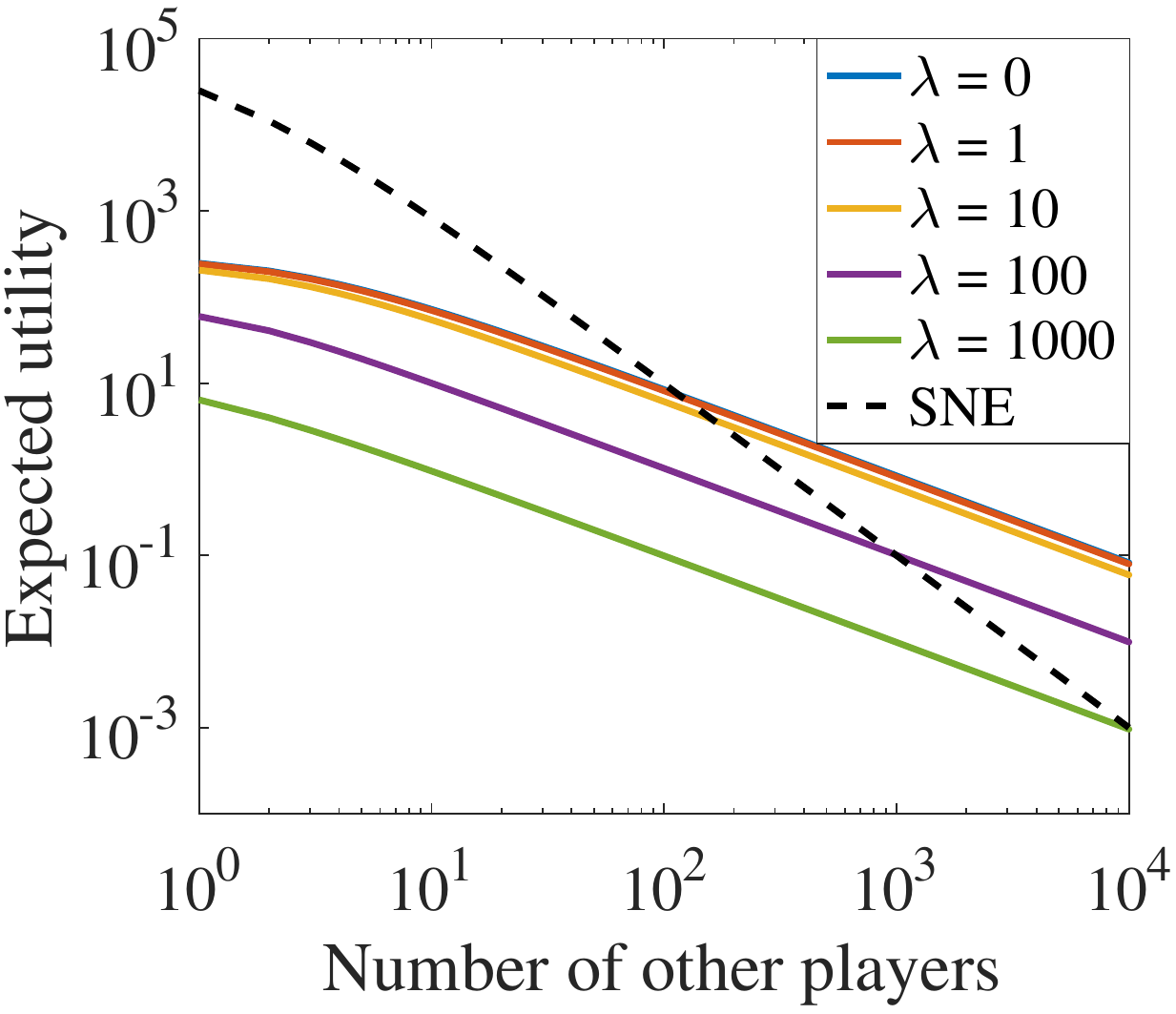}
&
\hspace{-2mm}
\includegraphics[width=.27\textwidth]{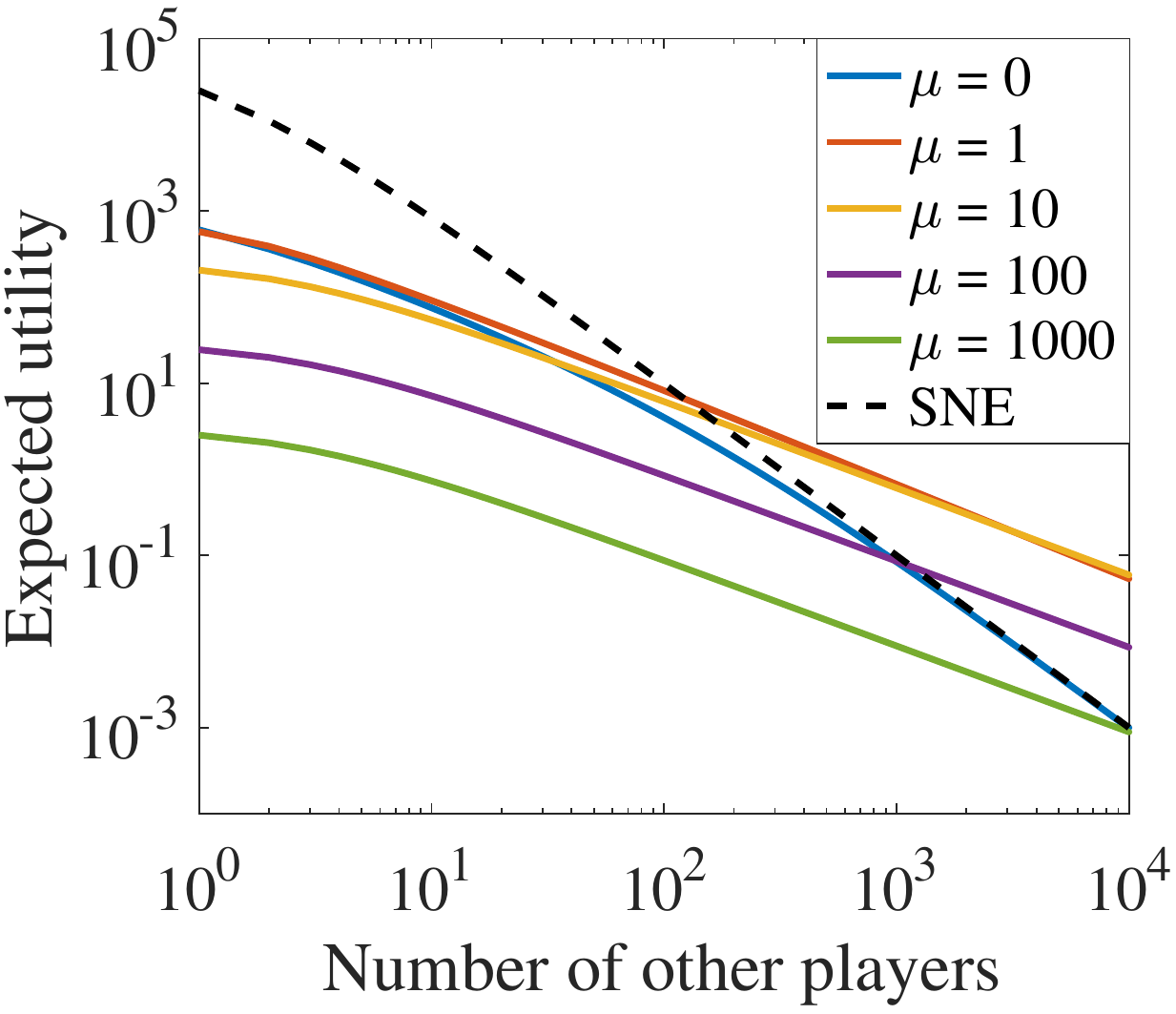}
\\
(a) for different $\lambda$'s ($\mu=10$)
&
(b) for different $\mu$'s ($\lambda=10$)
\end{tabular}
%\vspace{-3mm}
\caption{Expected utility of a player in Scenario 2 
} % $\gamma$}}
\label{fig:plots_scenario2}
%\vspace{-3mm}
\end{figure}

%\vspace{-1mm}
\begin{itemize}[leftmargin=*]
\setlength\itemsep{0em}

\item
An increase in the number of players increases  competition for the offered reward and hence reduces the reward per unit time received by each player,
with no balancing factor (unlike in Scenario 1);
so  
the expected utility decreases.

\item
For  higher $\lambda$, there is  higher likelihood of  system having more players,  thus resulting in  lower expected utility owing to  aforementioned reason. 
Also, from \Cref{fig:plots_scenario2}(a), if $\lambda$ is not very high, an increase in $\mu$ is likely to reduce the competition to the extent that the expected MPE utility 
when the number of players in the system is large, can exceed the corresponding SNE utility ($\frac{r}{|S|^2}$, which would be very low when the number of players in the system is large).

\item
A higher $\mu$ likely results in less competition, however it also results in a higher probability of player $i$ departing from the system and hence losing out on the reward for the time it stays out;
this leads to a tradeoff.
\Cref{fig:plots_scenario2}(b) shows that the effect of the  probability of player $i$  departing from the system dominates 
the effect of  
the reduction in competition.
For similar reasons as above, the expected MPE utility 
when the number of players in the system is large,
can exceed the corresponding SNE utility.
\end{itemize}

\vspace{-2mm}
\section{Future Work
%\vspace{-1mm}
}

One could study a variant of Scenario 1 where the rate of problem getting solved (and perhaps also the cost) increases non-linearly with  the invested power.
Since players are seldom completely rational in real world,
it would be useful to study the game under bounded rationality.
To develop a more sophisticated stochastic model,
one could obtain real data concerning the arrivals and departures of players and their investment strategies.
Another promising possibility is to incorporate state-learning in our model.
A Stackelberg game could be studied, where the system decides the amount of reward to offer, and then the computational providers decide how much power to invest based on the offered reward.

\vspace{-2mm}
\section*{Appendix
%\vspace{-1mm}
}

We take cues from bitcoin mining for our numerical simulations.
The current offered reward 
for successfully mining a block is $12.5$ bitcoins. Assuming 
1 bitcoin $\approx$ $\$ 8000$, the reward translates to $\$10^5$.
The bitcoin problem complexity is  set such that it takes around $10$ minutes on average for a block to get mined. That is, the rate of problem getting solved is 
$0.1$ per minute on average.
One of the most powerful ASIC (application-specific integrated circuit) currently available in  market is Antminer S9, which performs computations of upto 13 TeraHashes per sec, while 
consuming about $1.5$ kWh in 1 hour, which translates to $\$0.18$ per hour (at the rate of $\$0.12$ per kWh), equivalently $\$0.003$ per minute.
As per BitNode (\url{bitnodes.earn.com}), a crawler developed to estimate the size of  bitcoin network,
the number of bitcoin miners is 
around $10^4$.
Hence, we consider
$r =  10^5, \gamma = \beta = 0.1, c = 0.003, |\mathcal{U}| = 10^4$.

\vspace{-2mm}
\section*{Acknowledgment}

The work is partly supported by CEFIPRA grant No. IFC/DST-Inria-2016-01/448 ``Machine Learning for Network Analytics''. 

%\vspace{-2mm}

\balance

\begin{small}
\bibliographystyle{IEEEtran}

\bibliography{Blockchain_references}

% Generated by IEEEtran.bst, version: 1.13 (2008/09/30)
\begin{thebibliography}{10}
\providecommand{\url}[1]{#1}
\csname url@samestyle\endcsname
\providecommand{\newblock}{\relax}
\providecommand{\bibinfo}[2]{#2}
\providecommand{\BIBentrySTDinterwordspacing}{\spaceskip=0pt\relax}
\providecommand{\BIBentryALTinterwordstretchfactor}{4}
\providecommand{\BIBentryALTinterwordspacing}{\spaceskip=\fontdimen2\font plus
\BIBentryALTinterwordstretchfactor\fontdimen3\font minus
  \fontdimen4\font\relax}
\providecommand{\BIBforeignlanguage}[2]{{%
\expandafter\ifx\csname l@#1\endcsname\relax
\typeout{** WARNING: IEEEtran.bst: No hyphenation pattern has been}%
\typeout{** loaded for the language `#1'. Using the pattern for}%
\typeout{** the default language instead.}%
\else
\language=\csname l@#1\endcsname
\fi
#2}}
\providecommand{\BIBdecl}{\relax}
\BIBdecl

\bibitem{sarmenta2001volunteer}
L.~F.~G. Sarmenta, ``Volunteer computing,'' Ph.D. dissertation, Massachusetts
  Institute of Technology, 2001.

\bibitem{anderson2006computational}
D.~P. Anderson and G.~Fedak, ``The computational and storage potential of
  volunteer computing,'' in \emph{Sixth IEEE International Symposium on Cluster
  Computing and the Grid (CCGRID'06)}, vol.~1.\hskip 1em plus 0.5em minus
  0.4em\relax IEEE, 2006, pp. 73--80.

\bibitem{zheng2018blockchain}
Z.~Zheng and S.~Xie, ``Blockchain challenges and opportunities: A survey,''
  \emph{International Journal of Web and Grid Services}, 2018.

\bibitem{shapley1953stochastic}
L.~S. Shapley, ``Stochastic games,'' \emph{Proceedings of the National Academy
  of Sciences}, vol.~39, no.~10, pp. 1095--1100, 1953.

\bibitem{maskin2001markov}
E.~Maskin and J.~Tirole, ``Markov perfect equilibrium: I. observable actions,''
  \emph{Journal of Economic Theory}, vol. 100, no.~2, pp. 191--219, 2001.

\bibitem{gillette1957stochastic}
D.~Gillette, ``Stochastic games with zero stop probabilities,''
  \emph{Contributions to the Theory of Games}, vol.~3, pp. 179--187, 1957.

\bibitem{fink1964equilibrium}
A.~M. Fink \emph{et~al.}, ``Equilibrium in a stochastic $n$-person game,''
  \emph{Journal of Science of the Hiroshima University, series A-I
  (Mathematics)}, vol.~28, no.~1, pp. 89--93, 1964.

\bibitem{mertens1981stochastic}
J.-F. Mertens and A.~Neyman, ``Stochastic games,'' \emph{International Journal
  of Game Theory}, vol.~10, no.~2, pp. 53--66, 1981.

\bibitem{goeree1999stochastic}
J.~K. Goeree and C.~A. Holt, ``Stochastic game theory: For playing games, not
  just for doing theory,'' \emph{Proceedings of the National Academy of
  Sciences}, vol.~96, no.~19, pp. 10\,564--10\,567, 1999.

\bibitem{altman2006survey}
E.~Altman, T.~Boulogne, R.~El-Azouzi, T.~Jim{\'e}nez, and L.~Wynter, ``A survey
  on networking games in telecommunications,'' \emph{Computers \& Operations
  Research}, vol.~33, no.~2, pp. 286--311, 2006.

\bibitem{altman2003slotted}
E.~Altman, R.~El-Azouzi, and T.~Jimenez, ``Slotted {A}loha as a stochastic game
  with partial information,'' in \emph{WiOpt'03: Modeling and Optimization in
  Mobile, Ad Hoc and Wireless Networks}, 2003, p. 9 pages.

\bibitem{wang2011anti}
B.~Wang, Y.~Wu, K.~R. Liu, and T.~C. Clancy, ``An anti-jamming stochastic game
  for cognitive radio networks,'' \emph{IEEE Journal on Selected Areas in
  Communications}, vol.~29, no.~4, pp. 877--889, 2011.

\bibitem{fu2013stochastic}
F.~Fu and U.~C. Kozat, ``Stochastic game for wireless network virtualization,''
  \emph{IEEE/ACM Transactions on Networking}, vol.~21, no.~1, pp. 84--97, 2013.

\bibitem{altman1996non}
E.~Altman, ``Non zero-sum stochastic games in admission, service and routing
  control in queueing systems,'' \emph{Queueing Systems}, vol.~23, no. 1-4, pp.
  259--279, 1996.

\bibitem{bowling2000analysis}
M.~Bowling and M.~Veloso, ``An analysis of stochastic game theory for
  multiagent reinforcement learning,'' Carnegie-Mellon University Pittsburgh
  Pennsylvania School of Computer Science Technical Report No. CMU-CS-00-165,
  Tech. Rep., 2000.

\bibitem{bellomo2008modeling}
N.~Bellomo, \emph{Modeling complex living systems: A kinetic theory and
  stochastic game approach}.\hskip 1em plus 0.5em minus 0.4em\relax Springer
  Science \& Business Media, 2008.

\bibitem{altman1998individual}
E.~Altman and N.~Shimkin, ``Individual equilibrium and learning in processor
  sharing systems,'' \emph{Operations Research}, vol.~46, no.~6, pp. 776--784,
  1998.

\bibitem{nahir2012workload}
A.~Nahir, A.~Orda, and D.~Raz, ``Workload factoring with the cloud: A
  game-theoretic perspective,'' in \emph{IEEE International Conference on
  Computer Communications (INFOCOM)}, vol.~12.\hskip 1em plus 0.5em minus
  0.4em\relax IEEE, 2012, pp. 2566--2570.

\bibitem{hassin2002nash}
R.~Hassin and M.~Haviv, ``Nash equilibrium and subgame perfection in observable
  queues,'' \emph{Annals of Operations Research}, vol. 113, no. 1-4, pp.
  15--26, 2002.

\bibitem{wang2013strategic}
J.~Wang and F.~Zhang, ``{Strategic joining in M/M/1 retrial queues},''
  \emph{European Journal of Operational Research}, vol. 230, no.~1, pp. 76--87,
  2013.

\bibitem{hu2003nash}
J.~Hu and M.~P. Wellman, ``{Nash Q-learning for general-sum stochastic
  games},'' \emph{Journal of Machine Learning Research}, vol.~4, no. Nov, pp.
  1039--1069, 2003.

\bibitem{jiang2014dynamic}
C.~Jiang, Y.~Chen, Y.-H. Yang, C.-Y. Wang, and K.~R. Liu, ``Dynamic {C}hinese
  restaurant game: Theory and application to cognitive radio networks,''
  \emph{IEEE Transactions on Wireless Communications}, vol.~13, no.~4, pp.
  1960--1973, 2014.

\bibitem{wang2018game}
C.-Y. Wang, Y.~Chen, and K.~R. Liu, ``Game-theoretic cross social media
  analytic: How {Y}elp ratings affect deal selection on {G}roupon?'' \emph{IEEE
  Transactions on Knowledge and Data Engineering}, vol.~30, no.~5, pp.
  908--921, 2018.

\bibitem{abraham2006distributed}
I.~Abraham, D.~Dolev, R.~Gonen, and J.~Halpern, ``Distributed computing meets
  game theory: Robust mechanisms for rational secret sharing and multiparty
  computation,'' in \emph{ACM Symposium on Principles of Distributed
  Computing}.\hskip 1em plus 0.5em minus 0.4em\relax ACM, 2006, pp. 53--62.

\bibitem{kwok2005selfish}
Y.-K. Kwok, S.~Song, and K.~Hwang, ``Selfish grid computing: Game-theoretic
  modeling and {NAS} performance results,'' in \emph{IEEE International
  Symposium on Cluster Computing and the Grid}.\hskip 1em plus 0.5em minus
  0.4em\relax IEEE, 2005.

\bibitem{wei2010game}
G.~Wei, A.~V. Vasilakos, Y.~Zheng, and N.~Xiong, ``A game-theoretic method of
  fair resource allocation for cloud computing services,'' \emph{The Journal of
  Supercomputing}, vol.~54, no.~2, pp. 252--269, 2010.

\bibitem{chun2004selfish}
B.-G. Chun, K.~Chaudhuri, H.~Wee, M.~Barreno, C.~H. Papadimitriou, and
  J.~Kubiatowicz, ``Selfish caching in distributed systems: A game-theoretic
  analysis,'' in \emph{ACM Symposium on Principles of Distributed
  Computing}.\hskip 1em plus 0.5em minus 0.4em\relax ACM, 2004, pp. 21--30.

\bibitem{grosu2005noncooperative}
D.~Grosu and A.~T. Chronopoulos, ``Noncooperative load balancing in distributed
  systems,'' \emph{Journal of Parallel and Distributed Computing}, vol.~65,
  no.~9, pp. 1022--1034, 2005.

\bibitem{sapirshtein2016optimal}
A.~Sapirshtein, Y.~Sompolinsky, and A.~Zohar, ``Optimal selfish mining
  strategies in bitcoin,'' in \emph{International Conference on Financial
  Cryptography and Data Security}.\hskip 1em plus 0.5em minus 0.4em\relax
  Springer, 2016, pp. 515--532.

\bibitem{lewenberg2015bitcoin}
Y.~Lewenberg, Y.~Bachrach, Y.~Sompolinsky, A.~Zohar, and J.~S. Rosenschein,
  ``Bitcoin mining pools: A cooperative game theoretic analysis,'' in
  \emph{International Conference on Autonomous Agents and Multiagent Systems
  (AAMAS)}.\hskip 1em plus 0.5em minus 0.4em\relax IFAAMAS, 2015, pp. 919--927.

\bibitem{xiong2018optimal}
Z.~Xiong, S.~Feng, D.~Niyato, P.~Wang, and Z.~Han, ``Optimal pricing-based edge
  computing resource management in mobile blockchain,'' in \emph{IEEE
  International Conference on Communications (ICC)}.\hskip 1em plus 0.5em minus
  0.4em\relax IEEE, 2018, pp. 1--6.

\bibitem{altman2019mining}
E.~Altman, A.~Reiffers, D.~S. Menasch{\'{e}}, M.~Datar, S.~Dhamal, and
  C.~Touati, ``Mining competition in a multi-cryptocurrency ecosystem at the
  network edge: A congestion game approach,'' \emph{ACM SIGMETRICS Performance
  Evaluation Review}, vol.~46, no.~3, pp. 114--117, 2019.

\bibitem{kiayias2016blockchain}
A.~Kiayias, E.~Koutsoupias, M.~Kyropoulou, and Y.~Tselekounis, ``Blockchain
  mining games,'' in \emph{ACM Conference on Economics and Computation
  (EC)}.\hskip 1em plus 0.5em minus 0.4em\relax ACM, 2016, pp. 365--382.

\bibitem{hubbard2015vector}
J.~H. Hubbard and B.~B. Hubbard, \emph{Vector calculus, linear algebra, and
  differential forms: A unified approach}.\hskip 1em plus 0.5em minus
  0.4em\relax Matrix Editions, 2015.

\bibitem{bertsekas1995neuro}
D.~P. Bertsekas and J.~N. Tsitsiklis, ``Neuro-dynamic programming: an
  overview,'' in \emph{IEEE Conference on Decision and Control (CDC)},
  vol.~1.\hskip 1em plus 0.5em minus 0.4em\relax IEEE, 1995, pp. 560--564.

\end{thebibliography}
\end{small}

\end{document}